\newcommand{\Imag}[1]{\mathrm{Im} #1}
\newcommand{\Real}[1]{\mathrm{Re} #1}
\newcommand{\abs}[1]{\left|#1\right|}
\newcommand{\eyeM}{\bm{\mathrm{I}}}
\newcommand{\bD}{\bm{D}}
\newcommand{\Her}{{\mathrm{H}}}
\newcommand{\bC}{\bm{C}}
\newcommand{\NB}{M}
\newcommand{\NM}{K}
\newcommand{\NRe}{N}
\newcommand{\bv}{\bm{v}}
\newcommand{\norm}[1]{\left\|#1\right\|}
\newcommand{\summe}[2]{\sum_{#1}^{#2}}
\newcommand{\teR}{{\text{R}}}
\newcommand{\teRS}{{\text{RS}}}
\newcommand{\teDS}{{\text{DS}}}
\newcommand{\teDR}{{\text{DR}}}
\newcommand{\teDN}{{\text{DeN}}}
\newcommand{\Dec}{{\text{DN}}}
\newcommand{\tL}{{\text{L}}}
\newcommand{\bi}{\bm{i}}
\newcommand{\bb}{\bm{b}}
\newcommand{\be}{\bm{e}}
\newcommand{\ba}{\bm{a}}
\newcommand{\bx}{\bm{x}}
\newcommand{\bTheta}{\bm{\Theta}}
\newcommand{\bZ}{\bm{Z}}
\newcommand{\bz}{\bm{z}}
\newcommand{\bzero}{\bm{0}}
\newcommand{\tG}{{\text{G}}}
\newcommand{\teL}{{\text{L}}}
\newcommand{\tN}{{\text{N}}}
\newcommand{\inv}{{-1}}
\newcommand{\transpo}{{\mathrm{T}}}
\newcommand{\im}{\mathrm{j}}
\newcommand{\diag}{\mathrm{diag}}
\newcommand{\He}{\mathrm{H}}
\newcommand{\ones}{\bm{1}}
\definecolor{TUMBeamerYellow}    {rgb} {1.000,0.706,0.000}    
\definecolor{TUMBeamerOrange}    {rgb} {1.000,0.502,0.000}    
\definecolor{TUMBeamerRed}       {rgb} {0.898,0.204,0.094}    
\definecolor{TUMBeamerDarkRed}   {rgb} {0.792,0.129,0.247}    
\definecolor{TUMBeamerBlue}      {rgb} {0.000,0.600,1.000}    
\definecolor{TUMBeamerLightBlue} {rgb} {0.255,0.745,1.000}    
\definecolor{TUMBeamerGreen}     {rgb} {0.569,0.675,0.420}    
\definecolor{TUMBeamerLightGreen}{rgb} {0.710,0.792,0.510}    
\newtheorem{theorem}{Theorem}
\newtheorem{proposition}{Proposition}
\newtheorem{corollary}{Corollary}
\def\BibTeX{{\rm B\kern-.05em{\sc i\kern-.025em b}\kern-.08em
    T\kern-.1667em\lower.7ex\hbox{E}\kern-.125emX}}
\begin{document}

\title{Decoupling Networks and Super-Quadratic Gains\\ for RIS Systems with Mutual Coupling
}

\author{\IEEEauthorblockN{Dominik Semmler, Josef A. Nossek, \textit{Life Fellow, IEEE}, Michael Joham,\\
Benedikt Böck, and Wolfgang Utschick, \textit{Fellow, IEEE}}
\thanks{Preliminary results have been presented at the ISWCS 2024~\cite{ConferenceDecoupling}.}
\IEEEauthorblockA{\textit{School of Computation, Information and Technology, Technical University of Munich, 80333 Munich, Germany} \\
email: \{dominik.semmler,josef.a.nossek,joham,benedikt.boeck,utschick\}@tum.de}}

\maketitle

\begin{figure}[b]
    \onecolumn
    \centering
    \scriptsize{This work has been submitted to the IEEE for possible publication. Copyright may be transferred without notice, after which this version may no longer be accessible.}
    \vspace{-1.3cm}
    \twocolumn
\end{figure}
\begin{abstract}
We propose decoupling networks for the \ac{RIS} array as a solution to benefit from the mutual coupling between the reflecting elements.
In particular, we show that when incorporating these networks, the system model reduces to the same structure as if no mutual coupling is present.
Hence, all algorithms and theoretical discussions neglecting mutual coupling can be directly applied when mutual coupling is present by utilizing our proposed decoupling networks.
For example, by including decoupling networks, the channel gain maximization in \ac{RIS}-aided \ac{SISO} systems does not require an iterative algorithm but is given in closed form as opposed to using no decoupling network.
In addition, this closed-form solution allows to analytically analyze scenarios under mutual coupling resulting in novel connections to the conventional transmit array gain.
In particular, we show that super-quadratic (up to quartic) channel gains w.r.t. the number of \ac{RIS} elements are possible and, therefore, the system with mutual coupling performs significantly better than the conventional uncoupled system in which only squared gains are possible.
We consider diagonal as well as \ac{BD}-\acp{RIS} and give various analytical and numerical results, including the inevitable losses at the \ac{RIS} array.
In addition, simulation results validate the superior performance of decoupling networks w.r.t. the channel gain compared to other state-of-the-art methods.
\end{abstract}

\begin{IEEEkeywords}
Decoupling Networks, Legendre Polynomials
\end{IEEEkeywords}

\section{Introduction}
\Acp{RIS} are currently highly discussed as they are considered an important technology for future wireless communications systems (see \cite{Power_Min_IRS}, \cite{SmartRadioEnvironment}).
\Acp{RIS} are arrays consisting of many passive reflecting elements which can manipulate the incoming wavefronts and, hence, shape the propagation environment.
The potential of \acp{RIS} has already been demonstrated and utilizing a \ac{RIS} has shown to significantly improve, e.g., power consumption \cite{Power_Min_IRS} and energy efficiency \cite{EnergyEff} in various scenarios.
In these publications each of the reflecting elements is modeled by a phase manipulation.
This model is the most popular and is used in the majority of the \ac{RIS} literature (see  \cite{WSR,WMMSEMIMO,Eigenvalues,MIMOP2PCap,ZeroForcLOS,StatisticalSadaf,StatisticalCSITwo,StatisticalCSITwoTimeScale}).

The architecture of \ac{RIS} arrays is still under research and various approaches are discussed,
including \ac{STAR} \acp{RIS} \cite{STARRIS,STARRISTwo}, active \acp{RIS} \cite{ActiveVSPassive}, \cite{ActiveVSPassiveTwo} as well as \ac{BD}-\acp{RIS} \cite{ScatteringRIS,BDRISOptimalSolution,BeyondDiagonal}.
For \ac{BD}-\acp{RIS}, the manipulation matrix at the \ac{RIS} is not restricted to be diagonal but is generalized to a non-diagonal matrix allowing the different elements to be connected with each other (see \cite{ScatteringRIS}).
This leads to an improved performance at the cost of having an increased complexity.
Regardless of the chosen architecture, the elements at the \ac{RIS} are typically mounted on a surface close to each other.
Hence, mutual coupling between the \ac{RIS} elements is inevitable in such a system.
The effect of coupling between the elements is usually neglected, however, the system performs very differently when mutual coupling is included.
When including mutual coupling, a more advanced \ac{RIS} model is needed in comparison to the phase-shift model.

\begin{figure}
    \centering
\includegraphics*{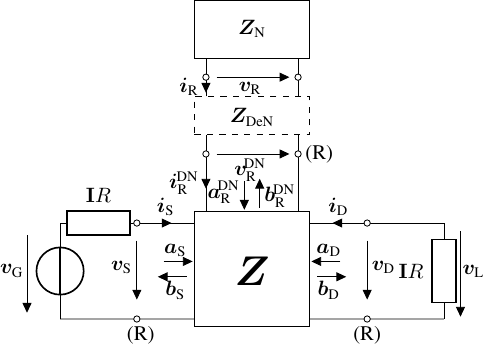}
\caption{Multiport Model of \cite{NewChannelModel} with an additional decoupling network $\bZ_\teDN$.}
\label{fig:Three_Port}
\end{figure}
Modeling the \ac{RIS} is under ongoing research and different models for \acp{RIS} have already been proposed in the literature.
In, e.g., \cite{MutualCouplingAware}, a \ac{RIS} model has been introduced that is based on impedance matrices.
Impedance-based descriptions have already been shown to provide a powerful description for conventional communication systems without the \ac{RIS} (see \cite{TowardCircuitTheory}).
The model in \cite{MutualCouplingAware} allows to include the effect of mutual coupling as well as different \ac{RIS} architectures.
Furthermore, in \cite{ScatteringRIS}, the \ac{RIS} has been modeled based on scattering parameters which can also include the effects of mutual coupling (see \cite{FollowUpBDRIS}).
Both models describe the same electromagnetic system and, hence, lead to the same conclusions.
A comparsion of these two models is given in \cite{NewChannelModel,FollowUPSZJournal,FollowUpUniversal,WSAComparison} and with the correct interpretation of the matrices (see \cite{NewChannelModel}), both approaches are equivalent.

It is therefore possible to use both, the scattering, as well as the impedance formulation to model the \ac{RIS} system under mutual coupling and one can also switch between the two descriptions when modeling the system.
When mutual coupling is included in the system model, both, the scattering and the impedance descriptions become intricate and
the system optimization and analysis gets more complicated.
For example, in \cite{MutualCouplingZAlgo}, mutual coupling has been analyzed based on impedance matrices and an algorithm using the Neumann series has been proposed for maximizing the \ac{SISO} channel gain.
The impact of mutual coupling on the rate has been further investigated in \cite{MutualCouplingSumRate} and \cite{MutualCouplingGradient}.
Additionally, in \cite{FollowUPSZJournal}, two mutual coupling aware algorithms based on the scattering parameters have been proposed which outperform the method in \cite{MutualCouplingZAlgo}.
All these algorithms rely on the Neumann series approximation whereas in \cite{ElementWise}, an element-wise algorithm has been derived which requires no approximation and provides closed-form solutions for each reactance.
In our previous work \cite{ConferenceDecoupling}, we propose an element-wise algorithm with reduced complexity.
All these solutions require an iterative algorithm to optimize the system under mutual coupling even in the simple case of maximizing a single-user \ac{SISO} channel gain.
Furthermore, as the solution with coupling requires an iterative algorithm it is difficult to analytically analyze the system.
Mutual coupling has also been investigated for the \ac{BD}-\ac{RIS} in \cite{FollowUpBDRIS} based on the Neumann series.
Similarly, the solution provided in this work also requires an iterative algorithm.

In our conference paper \cite{ConferenceDecoupling}, we propose decoupling networks as a solution to benefit from the mutual coupling at the \ac{RIS} array.
These networks have already been well investigated for the conventional transmit array (see, e.g., \cite{TowardCircuitTheory}).
In \cite{ConferenceDecoupling}, we demonstrate that the decoupling network transforms the system into a structure which is equivalent to a sytem without mutual coupling for which the coupling matrices are transferred into the channel matrices.
This simplifies the system model drastically as the model without coupling is significantly easier to handle and, additionally, well understood.
A direct consequence is that all methods (algorithms as well as theoretical analyses) without mutual coupling can be directly applied to the case of mutual coupling by incorporating decoupling networks into the model.
For example, the solution of a single-user \ac{SISO} channel gain maximization without mutual coupling can be given in closed-form \cite{MutualCouplingZAlgo}.
Hence, by using decoupling networks, a closed-form solution exists even in the case when considering mutual coupling (see \cite{ConferenceDecoupling}).
This has led to an analytical analysis of mutual coupling in \cite{ConferenceDecoupling} and we have shown that super-quadratic gains are possible under mutual coupling.

The resulting channel structure due to the decoupling networks which we have introduced in our conference paper \cite{ConferenceDecoupling} has recently been applied to analyze the BD-RIS with mutual coupling \cite{BDRISCoupling}.

In this article, we extend our results of \cite{ConferenceDecoupling} providing a thorough analysis of decoupling networks for the \ac{RIS} array.
In particular, we provide the following contributions:
\begin{itemize}
    \item Decoupling networks are proposed to handle the mutual coupling in the \ac{RIS} array.
          We show that when including these networks, the structure of the system model reduces to the case of no mutual coupling.
          Hence, all conventional solutions, neglecting the mutual coupling, can be directly extended to the case of mutual coupling when incorporating our proposed decouling networks.
          This comprises all algorithms as well as theoretical analyses.
    \item With decoupling networks, the equivalent adjustable impedance network obtains a non-diagonal structure even when considering a single-connected \ac{RIS}.
          While the equivalent matrix is non-diagonal, our method is significantly different to \ac{BD}-\acp{RIS} in which non-diagonal adjustable impedance networks are also used.
          Hence, we provide a thorough comparison of the approaches and highlight their differences.
    \item With decoupling networks it is possible to optimize the channel gain of a \ac{RIS}-aided \ac{SISO} system in closed form which allows to analytically analyze the array gain of a \ac{RIS} array under mutual coupling.
          From this analysis, we can draw connections to the conventional transmit array gain.
          In particular, we prove that a super-quadratic array gain which scales with $\NRe^4$ is possible in a \ac{RIS}-aided scenario. 
          In fact, this result provides novel insights in \ac{RIS}-aided communication since in previous work, it was assumed that only quadratic gains (i.e., $N^2$) are possible.
          Additionally, we compare the decoupling networks to the conventional \ac{RIS} (with and without mutual coupling) and show under which conditions the decoupling network is optimal.
          Furthermore, the important case of $\frac{\lambda}{2}$ antenna spacing is investigated.
    \item With the decoupling networks, we analyze both the single-connected \ac{RIS} as well as the \ac{BD}-\ac{RIS} under mutual coupling and show that regardless of the user position the \ac{BD}-\ac{RIS} channel gain scales at least cubic.
          Furthermore, we show that including losses in the model signficantly impacts the system's performance
    \item We demonstrate that the proposed decoupling networks result in superior performance w.r.t. the channel gain compared to other state-of-the-art methods.
\end{itemize}

\section{System Model}
We consider a \ac{RIS}-aided point-to-point \ac{MIMO} system where one \ac{BS}, having $\NB$ antennas, transmits data signals to $\NM$ receive antennas.
The transmission is supported by a \ac{RIS}, consisting of $\NRe$ reflecting elements. By using
the impedance representation, the channel from the \ac{BS} to the receiver reads as (see \cite{MutualCouplingAware}, \cite{NewChannelModel})
\begin{equation}\label{eq:ChannelModel}
    \begin{aligned}
    \bv_\teL &= \bD \bv_\tG = \frac{1}{4R} \bZ \bv_\tG\quad \text{with}\\
    \bZ &=  \bZ_\teDS - \bZ_\teDR( \bZ_\teR+ \bZ_\tN )^{\inv}\bZ_\teRS\\
    \end{aligned}
\end{equation}
with the definition of $\bv_\tG$ and $\bv_\tL$ according to Fig. \ref{fig:Three_Port}.
Here, $\bZ_\teDS$ is the direct channel from the \ac{BS} to the user, $\bZ_\teDR$ is the channel from the \ac{RIS} to the user, and $\bZ_\teRS$ the channel from the \ac{BS} to the \ac{RIS}.
The adjustable passive impedance network of the \ac{RIS} is given by $\bZ_\tN$,  whereas $\bZ_\teR$ is the impedance matrix of the \ac{RIS} antenna array accounting for mutual coupling.
If there were no mutual coupling present, $\bZ_\teR = \eyeM R$ would be a diagonal matrix.
In this article, we analyze the aspects of mutual coupling.
For isotropic radiators (which we consider in this article), the matrix $\bZ_\teR$ is defined as (see \cite[Eqn. (5)]{NewChannelModel}) 
\begin{equation}
    [\bZ_\teR]_{i,j} =    -\frac{R}{\im k d  \abs{i-j} }e^{-\im k  d \abs{i-j} } \quad \text{for} \; i\neq j,
\end{equation}
where $k= \frac{2 \pi}{\lambda}$ is the wave number and $d$ is the element spacing in wavelengths.
The diagonal entries are given by  $[\bZ_\teR]_{i,i} = R$.
Throughout this article, we assume the adjustable impedance network $\bZ_\tN$ to be lossless and reciprocal and, hence,
the matrix $\bZ_\tN$ 
\begin{equation}\label{eq:AdjNetwork}
    \bZ_\tN = \im\bm{X}_\tN
\end{equation}
with $\bm{X}_\tN=\bm{X}_\tN^{\transpo} \in \mathbb{R}^{N \times N}$ is real-valued. 

\section{Decoupling Networks}
To handle the mutual coupling between the reflecting elements at the \ac{RIS}, we propose decoupling networks.
These networks are well studied for the transmit and receive arrays (see, e.g., \cite{TowardCircuitTheory}) for handling the coupling between the antenna elements.
In addition to the transmit and receive arrays, we are now also assuming a decoupling network at the \ac{RIS} array according to Fig. \ref{fig:Three_Port}.
The performance of the decoupling networks for handling the mutual coupling in the \ac{RIS} array is analyzed in the following.

\subsection{Equivalent Adjustable Impedance Network}
In this section, we derive the new channel model when incorporating a decoupling network $\bZ_\teDN$ for the \ac{RIS} array according to Fig. \ref{fig:Three_Port}.
Please note that $\bZ_\teDN = \bZ_\teDN^\transpo$ and $\Real{ (\bZ_\teDN) }= \bzero$ holds as the decoupling network is assumed to be reciprocal and lossless.
Without the decoupling network, the recconfigurable network can be described by the equation
\begin{equation}\label{eq:ConvDecouplingNetwork}
    \bv_\teR = - \bZ_\tN \bi_\teR.
\end{equation}
Including now the decoupling network according to Fig. \ref{fig:Three_Port} leads to the new relationship 
\begin{equation}\label{eq:NewZN}
    \bv^\Dec_\teR = - \bZ_\tN^{\Dec} \bi^\Dec_\teR.
\end{equation}
In the following, we derive $\bZ_\tN^\Dec$.
With the decoupling network $\bZ_\teDN$, we obtain
\begin{equation}\label{eq:DecouplingNetworkGeneral}
    \begin{bmatrix}
        \bv_\teR\\
        \bv_\teR^\Dec
    \end{bmatrix}
    = 
    \begin{bmatrix}
         \bZ_{\teDN,11}& \bZ_{\teDN,12}\\
         \bZ_{\teDN,12}^\transpo& \bZ_{\teDN,22}
    \end{bmatrix}
\begin{bmatrix}
    \bi_\teR\\
    -\bi_\teR^\Dec
\end{bmatrix}
\end{equation}
with the four $\NRe \times \NRe$ matrix blocks $\bZ_{\teDN,ij}$. 
Substituting \eqref{eq:ConvDecouplingNetwork} into the first line of \eqref{eq:DecouplingNetworkGeneral}, we obtain 
\begin{equation}
    \bi_\teR =  (\bZ_{\teDN,11} +  \bZ_\tN)^{\inv}\bZ_{\teDN,12}\bi_\teR^\Dec.
\end{equation}
Together with the second line of \eqref{eq:DecouplingNetworkGeneral}, we arrive at
\begin{equation}
    \bv_\teR^\Dec = - (\bZ_{\teDN,22}-\bZ_{\teDN,12}^\transpo(\bZ_{\teDN,11} +  \bZ_\tN)^{\inv}\bZ_{\teDN,12})\bi_\teR^\Dec\hspace*{-6pt}.
\end{equation}
Therefore, we have found the new matrix [cf. \eqref{eq:NewZN}]
\begin{equation}\label{eq:NewAdjustImpNet}
    \bZ_\tN^\Dec = \bZ_{\teDN,22}-\bZ_{\teDN,12}^\transpo(\bZ_{\teDN,11} +  \bZ_\tN)^{\inv}\bZ_{\teDN,12}
\end{equation}
with the new channel model [cf. \eqref{eq:ChannelModel}]
\begin{equation}\label{eq:DecouplingChannelModel}
    \bZ^\Dec =  \bZ_\teDS - \bZ_\teDR( \bZ_\teR+  \bZ_\tN^\Dec)^{\inv}\bZ_\teRS.\\
\end{equation}
Comparing this channel model with the conventional one in \eqref{eq:ChannelModel},
we can see that the adjustable impedance network $\bZ_{\tN}$ is replaced by $\bZ_{\tN}^{\Dec}$ of \eqref{eq:NewAdjustImpNet} due to the decoupling network.
Therefore, a decoupling network only changes the adjustable impedance network by transforming the original impedance network $\bZ_\tN$  into $\bZ_\tN^{\Dec}$.
With a decoupling network $\bZ_{\teDN}$, we can now manipulate the adjustable impedance network $\bZ_\tN$ for the channel in \eqref{eq:DecouplingChannelModel} by choosing properly designed matrices in \eqref{eq:NewAdjustImpNet}.
Various choices exist for decoupling networks, see, e.g., \cite{TowardCircuitTheory}.
The only restriction is that the network should be lossless and reciprocal, i.e., $\bZ_\teDN = \bZ_\teDN^\transpo$ and $\Real{ (\bZ_\teDN) }= \bzero$.
However, there is one particular network structure which allows to completely decouple the \ac{RIS} array.
This is the power matching network (see \cite[(103)]{TowardCircuitTheory}) given by
\begin{equation}
    \bZ_{\teDN} = -\im \begin{bmatrix}
        \bzero& \sqrt{R} \Real{(\bZ_\teR)}^{\frac{1}{2}}\\
        \sqrt{R} \Real{(\bZ_\teR)}^{\frac{1}{2}}&  \Imag{(\bZ_\teR)}
    \end{bmatrix}
\end{equation}
with $ \Real{(\bZ_\teR)} = \Real{(\bZ_\teR)}^{\frac{1}{2}} \Real{(\bZ_\teR)}^{\frac{1}{2}}$ and we particularly focus on this choice throughout this article.
By using the power matching network, we arrive at the new adjustable impedance matrix 
\begin{equation}\label{eq:NewAdjustImpNetDecoupPowerMatch}
    \bZ_\tN^\Dec = -\im  \Imag{(\bZ_\teR)} + R \Real{(\bZ_\teR)}^{\frac{1}{2}} \bZ_\tN^{\inv} \Real{(\bZ_\teR)}^{\frac{1}{2}}.
\end{equation}
From \eqref{eq:NewAdjustImpNetDecoupPowerMatch} we can see that a lossless and reciprocal adjustable impedance network $\bZ_\tN$ directly results in a lossless and reciprocal $\bZ_\tN^\Dec$ as we have $\Real{(\bZ_\tN^\Dec)} = 0$ and $\bZ_\tN^\Dec = \bZ_\tN^{\Dec,\transpo}$.
This property results from the assumption that the decoupling network is also lossless and reciprocal. 
Furthermore, we can see that even if we assume a diagonal \ac{RIS} and, hence, the original impedance network $\bZ_\tN$ is diagonal,
 the new impedance matrix $\bZ_{\tN}^{\Dec}$ is in general non-diagonal.
A non-diagonal impedance network is typically referred to the so called \ac{BD}-\acp{RIS}.
Hence, the combination of an impedance network together with the decoupling network, resulting in a non-diagonal $\bZ^\Dec$, can be viewed as a particular type of \ac{BD}-\ac{RIS}.
However, significant differences exist when comparing them to the conventional \ac{BD}-\ac{RIS} architecture.
Please see Section \ref{sec:BDRISCompare} for details.

\subsection{Channel Model with Decoupling Networks}
In the following, we derive the new channel model when incorporating a decoupling network.
We start by rewriting the matrix $ \bZ_\teR + \bZ_\tN^\Dec$ within the inverse in \eqref{eq:DecouplingChannelModel}.
By inserting the definintion of the new adjustable impedance network accoring to \eqref{eq:NewAdjustImpNetDecoupPowerMatch} we arrive at
\begin{equation}\label{eq:InverseChannelModelDecoupMatching}
    \begin{aligned}
    \bZ_\teR + \bZ_\tN^\Dec &=\bZ_\teR - \im \Imag{(\bZ_\teR)}+R\Real{(\bZ_\teR)}^{\frac{1}{2}}\bZ_\tN^{\inv}\Real{(\bZ_\teR)}^{\frac{1}{2}}\\
    &= \Real{(\bZ_\teR)} + R \Real{(\bZ_\teR)}^{\frac{1}{2}} \bZ_\tN^{\inv} \Real{(\bZ_\teR)}^{\frac{1}{2}}\\
    &=\Real{(\bZ_\teR)}^{\frac{1}{2}} (R\eyeM + R^2\bZ_\tN^{\inv}) \frac{1}{R}\Real{(\bZ_\teR)}^{\frac{1}{2}}.\\
    \end{aligned}
\end{equation}
Hence, by defining the equivalent impedance network 
\begin{equation}
    \bm{\bar{Z}}_\tN = R^2\bZ_\tN^{\inv}
\end{equation}
as well as the effective impedance matrices as
\begin{equation}\label{eq:PowerMatchingNewChannelMatrices}
    \bm{\bar{Z}}_\teDR = \bZ_\teDR\Real{(\bZ_\teR)}^{-\frac{1}{2}}\sqrt{R}, \quad \bm{\bar{Z}}_\teRS = \sqrt{R}\Real{(\bZ_\teR)}^{-\frac{1}{2}}\bZ_\teRS,
\end{equation}
we arrive at the new channel model
\begin{equation}\label{eq:finalchannelmodeldecoupolingpowermatch}
    \begin{aligned}
        &\bZ^\Dec =  \bZ_\teDS - \bZ_\teDR( \bZ_\teR+  \bZ_\tN^\Dec)^{\inv}\bZ_\teRS\\
        &=  \bZ_\teDS - R\bZ_\teDR \Real{(\bZ_\teR)}^{-\frac{1}{2}} (\eyeM R +R^2\bZ_\tN^{\inv})^{\inv} \Real{(\bZ_\teR)}^{-\frac{1}{2}}\bZ_\teRS\\
        &=  \bZ_\teDS - \bm{\bar{Z}}_\teDR (\eyeM R +\bm{\bar{Z}}_\tN )^{\inv}\bm{\bar{Z}}_\teRS
    \end{aligned}
\end{equation}
which has the same structure as a \ac{RIS} channel model without mutual coupling.
Therefore, this is the same structure as the conventional channel model in \eqref{eq:ChannelModel} with the difference that instead of the non-diagonal $\bZ_\teR$ the scaled identity $\eyeM R$ is within the inverse.
The effect of mututal coupling is transformed into the new channel matrices $\bm{\bar{Z}}_\teDR$ and $\bm{\bar{Z}}_\teRS$.
This reduces the complexity of mutual coupling to the conventional system model without mutual coupling and all algorithms and methods can be extended to the case of mutual coupling by using these decoupling networks.

Before we focus on the analysis of this newly derived channel model, we give an alternative model based on the scattering parameters of the adjustable impedance model.
This alternative representaion with the scattering parameters is more popular in the literature and often leads to a representation which is easier to optimize.
Hence, we switch to the scattering description of the adjustable network $\bZ_\tN$ leading to
\begin{equation}
    \bZ_\tN = R (\eyeM + \bTheta)(\eyeM-\bTheta)^{\inv}.
\end{equation}
where $\bTheta$ is the equivalent scattering representation
\begin{equation}
    \ba_\teR = \bTheta \bb_\teR.
\end{equation}
The inverse is now given by 
\begin{equation}
    \bZ_\tN^{\inv} = \frac{1}{R}(\eyeM-\bTheta)(\eyeM + \bTheta)^{\inv}
\end{equation}
and we can write $(\eyeM R+R^2\bZ_\tN^{\inv})$ as 
\begin{equation}\label{eq:InverseScatteringFormulation}
    \begin{aligned}
        (\eyeM R +R^2\bZ_\tN^{\inv}) &=  \eyeM R + R (\eyeM-\bTheta)(\eyeM + \bTheta)^{\inv}\\
        &=R \left((\eyeM + \bTheta) + (\eyeM-\bTheta)\right)(\eyeM + \bTheta)^{\inv}\\
        &= 2R(\eyeM + \bTheta)^{\inv}.
    \end{aligned}
\end{equation}
Accordingly, by inserting \eqref{eq:InverseScatteringFormulation} and \eqref{eq:InverseChannelModelDecoupMatching} into \eqref{eq:DecouplingChannelModel},
 we arrive at the new channel model 
\begin{equation}\label{eq:finalchannelmodeldecoupolingpowermatch}
    \begin{aligned}
       & \bZ^\Dec =  \bZ_\teDS - \bZ_\teDR( \bZ_\teR+  \bZ_\tN^\Dec)^{\inv}\bZ_\teRS\\
        &=  \bZ_\teDS - R \bZ_\teDR \Real{(\bZ_\teR)}^{-\frac{1}{2}} (\eyeM R +R^2\bZ_\tN^{\inv})^{\inv} \Real{(\bZ_\teR)}^{-\frac{1}{2}}\bZ_\teRS\\
        &=  \bZ_\teDS - R \bm{\bar{Z}}_\teDR (\eyeM R +R^2\bZ_\tN^{\inv})^{\inv} \bm{\bar{Z}}_\teRS\\
        &=  \bZ_\teDS - \frac{1}{2R}\bm{\bar{Z}}_\teDR (\eyeM + \bTheta)\bm{\bar{Z}}_\teRS\\
        &= \bZ_\teDS + \frac{1}{2R}\bm{\bar{Z}}_\teDR(-\bTheta-\eyeM)\bm{\bar{Z}}_\teRS\\
        &= \bZ_\teDS + \frac{1}{2R}\bm{\bar{Z}}_\teDR(\bm{\bar{\Theta}}-\eyeM)\bm{\bar{Z}}_\teRS\\
    \end{aligned}
\end{equation}
with the equivalent scattering matrix 
\begin{equation}
    \bm{\bar{\Theta}} = - \bTheta.
\end{equation}
Summarizing the derivations in this section, we arrive at the two equivalent channel representations
\begin{equation}\label{eq:DecoupledChannelGainPowerMatching}
    \begin{aligned}
    \bZ^{\Dec}        &= \bZ_\teDS - \bm{\bar{Z}}_\teDR (\eyeM R +\bm{\bar{Z}}_\tN )^{\inv}\bm{\bar{Z}}_\teRS\\   
    &= \bZ_\teDS + \frac{1}{2R}\bm{\bar{Z}}_\teDR(\bm{\bar{\Theta}}-\eyeM)\bm{\bar{Z}}_\teRS\\     
\end{aligned}
\end{equation} 
which will be analyzed in the following.
For both representations in \eqref{eq:DecoupledChannelGainPowerMatching} we can notice that the channel model for decoupling networks shares the same structure as a system without mutual coupling between the \ac{RIS} elements. 
Hence, all algorithms and solutions, neglecting the mutual coupling of the \ac{RIS} array, can be extended to mutual coupling by considering the new channel matrices in \eqref{eq:PowerMatchingNewChannelMatrices}.
This is a significant advantage of decoupling networks and we discuss the performance of this approach in Section \ref{sec:ArrayPerfAnalysis}.
\subsection{Multiple $\frac{\lambda}{2}$ Spacing}\label{sec:LambdaHalf}
For the special case of a multiple $\frac{\lambda}{2}$ spacing, we have
\begin{equation}
    \bZ_\teR = \eyeM R, \quad \text{for} \quad d = \frac{\lambda}{2} k, \; k \in \mathbb{Z}^+,
\end{equation}
which results in  $\bm{\bar{Z}}_\teDR =  \bm{{Z}}_\teDR$ and $ \bm{\bar{Z}}_\teRS =  \bm{{Z}}_\teRS$ and, therefore in the channel model
\begin{equation}
    \begin{aligned}
    \bZ^{\Dec}        &= \bZ_\teDS - \bm{{Z}}_\teDR (\eyeM R +\bm{\bar{Z}}_\tN )^{\inv}\bm{{Z}}_\teRS.\\   
\end{aligned}
\end{equation} 
We can see that this is exactly the conventional channel model in \eqref{eq:ChannelModel} without mutual coupling.
Therefore, for a multiple $\frac{\lambda}{2}$ spacing the decoupled \ac{RIS} is equal to an uncoupled \ac{RIS}.
It is important that the conventional \ac{RIS} without decoupling networks is different to the uncoupled \ac{RIS} even for a multiple $\frac{\lambda}{2}$ spacing as the imaginary part of $\bZ_\teR$ is still included in the model which is non-zero for a multiple $\frac{\lambda}{2}$ spacing.
Only fully-connected \acp{RIS} are an exception as they lead to the same model with or without decoupling networks (see Section \ref{sec:FullyConnectedRISDecoupled} for details).

Furthermore, for this special antenna spacing, the implementation of the decoupled \ac{RIS} can be simplified.
The effective impedance network is given by
\begin{equation}
    \begin{aligned}
        \bZ_\tN^\Dec &= -\im  \Imag{(\bZ_\teR)} + R \Real{(\bZ_\teR)}^{\frac{1}{2}} \bZ_\tN^{\inv} \Real{(\bZ_\teR)}^{\frac{1}{2}}\\
        &= -\im  \Imag{(\bZ_\teR)} + R\bZ_\tN^{\inv}.
    \end{aligned}
\end{equation}
Therefore, it is possible to omit the decoupling networks and equivalently implement directly the impedance network $\bZ_\tN^\Dec $.
For example, in case of the single-connected \ac{RIS}, the adjustable impedance network is chosen as
\begin{equation}
    \bZ_\tN = -\im  \Imag{(\bZ_\teR)} -  \im \diag(\bx_\tN)
\end{equation}
where the diagonal elements are adjustable.
The off-diagonals are static and only depend on the mutual coupling which is the difference to a \ac{BD}-\ac{RIS} (see Section \ref{sec:BDRISCompare} for details).
\section{Connection to the BD-RIS}\label{sec:BDRISCompare}
We have already seen in \eqref{eq:NewAdjustImpNetDecoupPowerMatch} that independent of $\bZ_\tN$ being diagonal or non-diagonal, the new impedance network $\bZ_\tN^{\Dec}$ will be generally non-diagonal.
Hence, by including a decoupling network into the system model, the resulting network $\bZ_\tN^\Dec$ belongs to the \ac{BD}-\acp{RIS}.
However, we will see that this is a specific type of \ac{BD}-\ac{RIS} which is significantly different from conventional architectures.
In this section, we discuss the similarities and the differences of a single-connected \ac{RIS} with decoupling networks in comparison to a conventional \ac{BD}-\ac{RIS}.
\subsection{Diagonal RIS}
From \eqref{eq:NewAdjustImpNetDecoupPowerMatch} we know that the new adjustable impedance network with decoupling networks is given by 
\begin{equation}\label{eq:NewAdjustImpNetDecoupPowerMatchRecap}
    \bZ_\tN^\Dec = -\im  \Imag{(\bZ_\teR)} + R \Real{(\bZ_\teR)}^{\frac{1}{2}} \bZ_\tN^{\inv} \Real{(\bZ_\teR)}^{\frac{1}{2}}
\end{equation}
which generally results in a non-diagonal impedance matrix $\bZ_\tN^\Dec$ regardless of the structure of $\bZ_\tN$.
We now assume a diagonal impedance network 
\begin{equation}
    \bZ_\tN = \im \diag(\bx_\tN)
\end{equation}
for which the new adjustable impedance network $\bZ_\tN^\Dec$ is still non-diagonal.
However, the important difference to \ac{BD}-\acp{RIS} as they are discussed in the current literature is that the non-diagonal structure of the adjustable impedance network in \eqref{eq:NewAdjustImpNetDecoupPowerMatchRecap} only results from the mutual coupling.
As it only depends on the mutual coupling in the system, the decoupling network is static and independent of the scenario.
Hence, the elements of the decoupling networks are not reconfigurable and only have to be calibrated once at the fabrication of the array.
Afterwards, the decoupling networks does not change and is valid for any scenario.

This is the difference to the \ac{BD}-\acp{RIS} for which the non-diagonal structure is a reconfigurable design parameter that depends on the scenario.
In this case, the matrix $\bZ_\tN$ does not only depend on the mutual coupling of the antenna array but is also reconfigured based on the channel realizations or alternatively, its statistics.

It is, however, important to note that decoupling networks and \ac{BD}-\acp{RIS} are not competing architectures as they follow inherently different concepts.
The decoupling networks are specifically designed for handling the mutual coupling whereas \ac{BD}-\acp{RIS} have more degrees of freedom for optimizing the \ac{RIS} at the cost of an increased complexity.
This aspect is independent of whether there exists mutual coupling in the system or not.
Specifically, decoupling networks are only designed for the mutual coupling of the \ac{RIS} array and can be combined with any \ac{RIS} architecture, including both, the diagonal and \ac{BD} architectures. 

This is further illustrated by the new channel model for deocupling networks according to \eqref{eq:DecoupledChannelGainPowerMatching}
\begin{equation}
    \begin{aligned}
    \bZ^{\Dec}       
    &= \bZ_\teDS + \frac{1}{2R}\bm{\bar{Z}}_\teDR(\bm{\bar{\Theta}}-\eyeM)\bm{\bar{Z}}_\teRS.\\     
\end{aligned}
\end{equation} 
While the decoupling network has already been applied, the constraint set of $\bm{\bar{\Theta}}$ (or in the impedance representation $\bm{\bar{Z}}_\tN$) still plays a major role.
Therefore, even when using a decoupling network, the architecture of $\bm{\bar{\Theta}}$ ($\bm{\bar{Z}}_\tN$) being diagonal or non-diagonal matters.
For example, when combining the decoupling network with a diagonal \ac{RIS} we have the popular unit-modulus constraint set
\begin{equation}
    \bm{\bar{\Theta}} = \diag(\bm{\bar{\theta}}), \quad \abs{\bar{\theta}_n} =1\, \forall n.
\end{equation}
Whereas combining the decoupling networks with a fully-connected \ac{BD}-\ac{RIS} results in the constraint set
\begin{equation}
    \bm{\bar{\Theta}}= \bm{\bar{\Theta}}^{\transpo} \quad \text{and} \quad \bm{\bar{\Theta}}^{\He}\bm{\bar{\Theta}} = \eyeM.
\end{equation}
Additionally, decoupling networks can also be combined with other, partially connected, \ac{BD}-\ac{RIS} structures.
Please see, e.g., \cite{ScatteringRIS} for the different architectures.
However, in this article we will only analyze fully-connected and diagonal \acp{RIS}.

In summary, a decoupling network deals with the mutual coupling of the \ac{RIS} array and by that it transforms the system model with mutual coupling into a structure that resembles a system model without mutual coupling.
However, the decoupling networks can be combined with either a diagonal \ac{RIS} or a \ac{BD}-\ac{RIS} architecture.

\subsection{Fully-Connected BD-RIS}\label{sec:FullyConnectedRISDecoupled}
We will now focus on the fully-connected architecture where we can choose an arbitrary impedance matrix
\begin{equation}
    \bZ_\tN = \im \bm{X}_\tN
\end{equation}
with the only restriction that it is lossless and reciprocal.
As the fully-connected \ac{BD}-\ac{RIS} is already capable of implementing any non-diagonal (lossless and reciprocal) impedance matrix, 
the decoupling networks are not beneficial in this case. 
When combining the fully-connected tunable network $\bZ_\tN= \im \bm{X}_\tN$ with the decoupling network $\bZ_\teDN$, the effective network according to \eqref{eq:NewAdjustImpNetDecoupPowerMatch} is given by
\begin{equation}\label{eq:DecoupleBDRIS}
    \bZ_\tN^\Dec = -\im  \Imag{(\bZ_\teR)} -\im R \Real{(\bZ_\teR)}^{\frac{1}{2}}   \bm{X}_\tN^{\inv} \Real{(\bZ_\teR)}^{\frac{1}{2}}.
\end{equation}
This network is again lossless and reciprocal as $\Real(\bZ_\tN^\Dec) = \bm{0}$ and $\bZ_\tN^\Dec=\bZ_\tN^{\Dec,\transpo}$. 
As $\bZ_\tN^\Dec$ is a lossless and reciprocal network, there is no advantage of using a decoupling network, as the new network $\bZ_\tN^{\Dec}$ could be directly implemented with the \ac{BD}-\ac{RIS}
by choosing 
\begin{equation}\label{eq:DecoupleBDRISEquivalent}
    \bZ_\tN= -\im  \Imag{(\bZ_\teR)} + \im \frac{1}{R} \Real{(\bZ_\teR)}^{\frac{1}{2}}  \bm{X}_\tN^{\prime}\Real{(\bZ_\teR)}^{\frac{1}{2}}
\end{equation}
where $\bm{X}_\tN^{\prime}$ is any lossless and reciprocal impedance network.
Therefore, for a fully connected \ac{RIS} a decoupling network is not beneficial and there is a bidirectional mapping between $\bZ_\tN^\Dec$ and $ \bZ_\tN$ in \eqref{eq:DecoupleBDRIS} and 
between $\bZ_\tN$ and $\bm{X}_\tN^{\prime}$ in \eqref{eq:DecoupleBDRISEquivalent}. 

It is indeed quite intuitive that the decoupling network is unneccessary for a fully-connected \ac{BD}-\ac{RIS}.
Since we can anyway design an arbitrary lossless reciprocal network when incorporating a fully-connected \ac{BD}-\ac{RIS}, connecting a second lossless reciprocal network does not give us more design choices.
While the decoupling networks provide no benefits in this case, it can still be exploited that their special structure leads to a channel decomposition corresponding to having no mutual coupling.
Using \eqref{eq:DecoupleBDRISEquivalent}, the equivalent channel for the \ac{BD}-\ac{RIS} can be written as
\begin{equation}\label{eq:DecoupleBDRISEquivalentChannel}
    \bZ =  \bZ_\teDS - \bm{\bar{Z}}_\teDR^{}( \eyeM R + \im \bm{{X}}^{\prime}_\tN )^{\inv} \bm{\bar{Z}}_\teRS^{}
\end{equation}
which follows the exact same structure when having no mutual coupling and is a direct consequence when utilizing decoupling networks. 
Hence, when using a fully-connected \ac{BD}-{RIS}, the decompositions in \eqref{eq:DecoupleBDRIS}--\eqref{eq:DecoupleBDRISEquivalentChannel} introduced in \cite{ConferenceDecoupling} are also present when having no decoupling network \cite{BDRISCoupling}.

\section{Channel and Array Gain Maximization}\label{sec:ArrayPerfAnalysis}
We have already seen that decoupling networks result in a channel model which has the same structure as a system without mutual coupling.
In this section, we analyze the resulting performance of the decoupling networks in comparison to a conventional \ac{RIS} with mutual coupling but without the decoupling networks.
Moreover, in this section, we also provide an analytic discussion on the influence of mutual coupling on \ac{RIS}-aided systems.
Specifically we consider the channel and array gain of a single-user \ac{SISO} system.
Hence, the \ac{BS} is assumed to have one antenna serving a single user with also one antenna.
The \ac{RIS}, however, can have an arbitrary number of \ac{RIS} elements $N$.
The channel model for this scenario, when using decoupling networks, reads according to \eqref{eq:DecoupledChannelGainPowerMatching} as 
\begin{equation}\label{eq:SISOChannelGain}
    z^{\Dec} = z_\teDS + \frac{1}{2R}\bm{\bar{z}}_\teDR^{\transpo} (\bm{\bar{\Theta}} - \eyeM)\bm{\bar{z}}_\teRS
\end{equation}
where the matrix $\bm{\bar{\Theta}}$ can be either diagonal or non-diagonal depending on the \ac{RIS} architecture.
As \eqref{eq:SISOChannelGain} has the structure of a system without mutual coupling, the maximization of $\abs{z^{\Dec}}^2$ can be given in closed-form and in case of a diagonal \ac{RIS} $\bm{\bar{\Theta}} = \diag(\bm{\bar{\theta}})$, the optimal solution reads as \cite{MutualCouplingZAlgo}
\begin{equation}
    \bm{\bar{\theta}} = \exp\left(\im \arg\left(z_\teDS - \frac{1}{2R}\bm{\bar{z}}_\teDR^{\transpo} \bm{\bar{z}}_\teRS\right)\bm{1} -\im \arg(\bm{\bar{z}}_\teDR \odot \bm{\bar{z}}_\teRS)\right)
\end{equation}
with the corresponding channel gain given by
\begin{equation}\label{eq:PowerMatchingClosedForm}
    \abs{z^{\Dec}_{\text{D}}}^2 = \left(\abs{z_\teDS -\frac{1}{2R}\bm{\bar{z}}_\teDR^{\transpo}\bm{\bar{z}}_\teRS } + \frac{1}{2R}\summe{n=1}{\NRe}\abs{\bar{z}_{\teDR,n}^{}}\abs{ \bar{z}_{\teRS,n}}\right)^2\hspace*{-3pt}.
\end{equation}
For the fully-connected \ac{BD}-\ac{RIS}, on the other hand, we use an upper bound (see \cite{ScatteringRIS}) for which the channel gain is given by 
\begin{equation}\label{eq:BDUpperBoundChannel}
    \abs{z^{\Dec}_{\text{BD}}}^2 \le \left(\abs{\bar{z}_\teDS -\frac{1}{2R}\bar{z}_\teDR^{\transpo}\bar{z}_\teRS } + \frac{1}{2R} \norm{\bar{z}_{\teDR}^{}}_2\norm{ \bar{z}_{\teRS}}_2\right)^2.
\end{equation}
Interestingly, in \cite{BDRISOptimalSolution}, it has been shown that this upper bound can be achieved with the fully-connected \ac{BD}-\ac{RIS} and, hence, \eqref{eq:BDUpperBoundChannel} is the actual channel gain of the \ac{BD}-\ac{RIS}.
Throughout this section we assume the \ac{BD}-\ac{RIS} to be always fully-connected.
As the \ac{BD}-\ac{RIS} has more degrees of freedom we can already see that
\begin{equation}
    \abs{z^{\Dec}_{\text{D}}}^2 \le  \abs{z^{\Dec}_{\text{BD}}}^2
\end{equation}
holds and the \ac{BD}-\ac{RIS} leads, therefore, in general to a better performance.
\subsection{Scenario}
To analyze the channel/array gain for the different architectures,
we assume a \ac{LOS} scenario according to Fig. \ref{fig:Scenario} where the direct channel from the \ac{BS} to the user is blocked, i.e., $z_\teDS = 0 \, \Omega$.
\begin{figure}[h!]

	\centering
	\resizebox{0.3 \textwidth}{!}{
	\includegraphics*{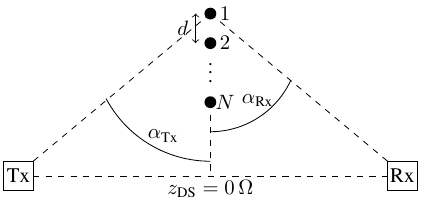}}

	\caption{\ac{SISO} Link with $\NRe$ Elements at the \ac{RIS}.}
	\label{fig:Scenario}
	
\end{figure}

We assume a \ac{ULA} at the \ac{RIS} which results in the channels
\begin{equation}
    z_\teDS = 0 \,\Omega, \quad \bz_\teDR^\transpo = \sqrt{\gamma_{\teDR}} \ba^\transpo_\teDR R, \quad  \bz_\teRS = \sqrt{\gamma_{\teRS}} \ba_\teRS R
\end{equation}
where $\sqrt{\gamma_{\teDR}}$, $\sqrt{\gamma_{\teRS}}$ are the pathlosses and $\ba_\teDR = \ba(\alpha_{\text{Rx}}), \; \ba_\teRS = \ba(\alpha_{\text{Tx}})$ the \ac{LOS} \ac{ULA} vectors where the $n$-th entry $a_n(\alpha)$ is given by
\begin{equation}
    a_n(\alpha) = e^{-\im(n-1) 2 \pi \frac{d}{\lambda} \cos(\alpha)}.
\end{equation}
The channel gain for one \ac{RIS} element ($N=1$), reads for both, the diagonal and the \ac{BD}-\ac{RIS} as $\abs{z^\Dec}^2=  \gamma_{\teDR}\gamma_{\teRS}R^2$.
Normalizing the channel gain for $\NRe$ elements by the gain of one element, we arrive at the array gain
\begin{equation}\label{eq:ArrayGainD}
    A_{\text{D}}^{\Dec} = \frac{1}{4}\hspace*{-2pt} \left(\abs{\ba_\teDR^{\transpo}\bC_\teR^{\inv}\ba_\teRS}+\summe{n=1}{\NRe}\abs{\ba_\teDR^\transpo\bC_\teR^{-\frac{1}{2}}\be_n}\hspace*{-2pt}\abs{\be^\transpo_n\bC_\teR^{-\frac{1}{2}}\ba_\teRS} \right)^{\hspace*{-4pt}2}\hspace*{-3pt}
\end{equation}
for the diagonal \ac{RIS} and at
\begin{equation}\label{eq:ArrayGainBD}
    \begin{aligned}
        A_{\text{BD}}^{\Dec} &= \frac{1}{4} \left(\abs{\ba_\teDR^{\transpo}\bC_\teR^{\inv}\ba_\teRS}+ \sqrt{\ba_\teDR^\Her\bC_\teR^{-1}\ba_\teDR \, \ba_\teRS^{\Her}\bC_\teR^{-1}\ba_\teRS} \right)^2
    \end{aligned}
\end{equation}
for the \ac{BD}-\ac{RIS} where we used the definition
\begin{equation}
    \bC_\teR = \frac{1}{R} \Real{(\bZ_\teR)}.
\end{equation}
It of course still holds that 
\begin{equation}
    A_{\text{D}}^{\Dec} \le A_{\text{BD}}^{\Dec}.
\end{equation}
Additionally, we know that the fully-connected \ac{BD}-\ac{RIS} without a decoupling network is equivalent to a fully-connected \ac{BD}-\ac{RIS} with a decoupling network.
Hence, we also have 
\begin{equation}\label{eq:BDSamewwoDecoup}
    A_{\text{BD}}^{\Dec} = A_{\text{BD}}
\end{equation}
where $A_{\text{BD}}$ is the array gain of a conventional system without the decoupling network.
Thus, $A_{\text{BD}}^{\Dec}$ and $A_{\text{BD}}$ can be used interchangeable.

First, we compare the \ac{BD}-\ac{RIS} and the diagonal \ac{RIS} structure for the setup $\alpha_{\text{Tx}}=0$, $\alpha_{\text{Rx}}=\frac{\pi}{2}$ in Fig. \ref{fig:Corner}.
\begin{figure}[!t]

	\centering
    \includegraphics*{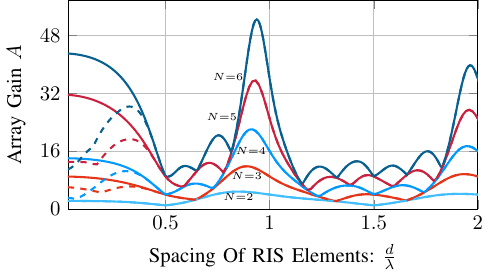}
    \vspace*{-0.3cm}
	\caption{Array Gain for $\alpha_{\text{Tx}}=0$ and $\alpha_{\text{Rx}}=\frac{\pi}{2}$. The BD-RIS has solid curves and the diagonal RIS has dashed curves.}
\label{fig:Corner}
\end{figure}
Interestingly, for small element spacings the diagonal \ac{RIS} (dashed curves) with $2k-1$ elements performs better performance than with $2k$ elements.
For the \ac{BD}-\ac{RIS} (solid curves), this is different and increasing the number of elements monotonically improves the performance for any spacing.
Additionally, we can observe that for small element spacings, the \ac{BD}-\ac{RIS} leads to a better performance in comparison to the diagonal \ac{RIS}, 
whereas for an increased spacing $d \ge \frac{\lambda}{2}$ there is no difference between the two architectures.
Hence, we will further study the transition point $d=\frac{\lambda}{2}$ in the following.

\subsection{Multiple $\frac{\lambda}{2}$ spacing}
In the case of having an element spacing of 
\begin{equation}
    d = \frac{\lambda}{2} k, \; k \in \mathbb{Z}^+,
\end{equation}
the matrix  $\bC_\teR$ results in the identity matrix, i.e., 
\begin{equation}
    \bC_\teR = \eyeM.
\end{equation}
From Section \ref{sec:LambdaHalf} we know that in this case a decoupled \ac{RIS} is equal to an uncoupled \ac{RIS} whereas the conventional \ac{RIS} without decoupling networks leads to a different model.
Only the fully-connected \ac{BD}-\ac{RIS} is an exception as the system model is identical with or without decoupling networks (see Section \ref{sec:FullyConnectedRISDecoupled}).
We are comparing the performance of the various architectures in case of a multiple $\frac{\lambda}{2}$ spacing.
The channel gain of a diagonal \ac{RIS} with decoupling networks is given as
\begin{equation}
    \begin{aligned}
        A_{\text{D}}^{\Dec} &= \frac{1}{4} \left(\abs{\ba_\teDR^{\transpo}\ba_\teRS}+\summe{n=1}{\NRe}\abs{\ba_\teDR^\transpo\be_n}\abs{\be^\transpo_n\ba_\teRS} \right)^2\\
        & = \frac{1}{4} \left(\abs{\ba_\teDR^{\transpo}\ba_\teRS}+N \right)^2.
    \end{aligned}
\end{equation}
Additionally, for the \ac{BD}-\ac{RIS} we arrive at the channel gain
\begin{equation}
    \begin{aligned}
        A_{\text{BD}} = A_{\text{BD}}^{\Dec} &= \frac{1}{4} \left(\abs{\ba_\teDR^{\transpo}\ba_\teRS}+ \sqrt{\ba_\teDR^\Her\ba_\teDR  \ba_\teRS^{\Her}\ba_\teRS} \right)^2\\
        & = \frac{1}{4} \left(\abs{\ba_\teDR^{\transpo}\ba_\teRS}+N \right)^2\\
        &= A_{\text{D}}^{\Dec}
    \end{aligned}
\end{equation}
where $A_{\text{BD}} = A_{\text{BD}}^{\Dec}$ is given in \eqref{eq:BDSamewwoDecoup}. 
Therefore, the \ac{BD}-\ac{RIS} and the diagonal \ac{RIS} have the same performance 
\begin{equation}
    A_{\text{BD}} = A_{\text{BD}}^{\Dec} =  A_{\text{D}}^{\Dec}
\end{equation}
for a  multiple $\frac{\lambda}{2}$ spacing and are equal to an uncoupled \ac{RIS} array.
With these comparisons it is also possible to compare the conventional diagonal \ac{RIS} with the decoupled \ac{RIS}.
The decoupled \ac{RIS} leads for this spacing always to a better performance as 
\begin{equation}
    A_{\text{D}}^{\Dec} = A_{\text{BD}} \ge A_{\text{D}}
\end{equation}
where $ A_{\text{D}}$ is the array gain of the conventional diagonal \ac{RIS}.
This also means that at $d=\frac{\lambda}{2}$ un uncoupled diagonal \ac{RIS} will always perform better in comparison to a diagonal \ac{RIS} with mutual coupling.

\begin{figure}[!t]
	\centering
    \includegraphics*{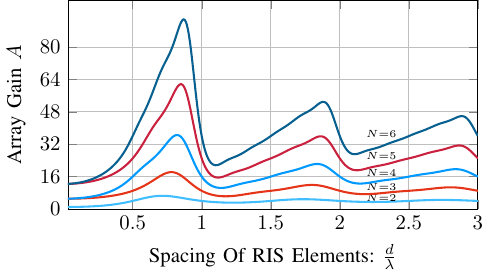}
    \vspace*{-0.3cm}
	\caption{Array Gain for the front-fire direction of the array.}
\label{fig:FrontFire}

\end{figure}

\subsection{Specular Reflection Scenario}
We now return to an arbitrary spacing $d$.
In particular, the specular reflection scenario is now investigated, where we have 
\begin{equation}
    \alpha_{\text{Tx}} = \pi - \alpha_{\text{Rx}}, \quad \text{and, hence, }\quad \ba_\teDR = \ba_\teRS^*.
\end{equation}
Two particular important scenarios, the front-fire and the end-fire direction belong to this specular type and are investigated below in detail.
For the specular case, the array gain of the diagonal \ac{RIS} is given as
\begin{equation}
    \begin{aligned}
        A_{\text{D}}^{\Dec} \hspace*{-3pt} &= \frac{1}{4} \left(\abs{\ba_\teDR^{\transpo}\bC_\teR^{\inv}\ba_\teRS}+\summe{n=1}{\NRe}\abs{\ba_\teDR^\transpo\bC_\teR^{-\frac{1}{2}}\be_n}\abs{\be^\transpo_n\bC_\teR^{-\frac{1}{2}}\ba_\teRS} \right)^{\hspace*{-3pt} 2}\\
        &= \frac{1}{4} \left(\ba_\teRS^{\Her}\bC_\teR^{\inv}\ba_\teRS+\summe{n=1}{\NRe}\abs{\ba_\teRS^\Her\bC_\teR^{-\frac{1}{2}}\be_n}\abs{\be^\transpo_n\bC_\teR^{-\frac{1}{2}}\ba_\teRS} \right)^2\\
        &= \frac{1}{4} \left(\ba_\teRS^{\Her}\bC_\teR^{\inv}\ba_\teRS+\norm{\bC_\teR^{-\frac{1}{2}}\ba_\teRS}^2 \right)^2\\
        &=  \left(\ba_\teRS^{\Her}\bC_\teR^{\inv}\ba_\teRS\right)^2\\
    \end{aligned}
\end{equation}
whereas the array gain of the \ac{BD}-\ac{RIS} is given by
\begin{equation}
    \begin{aligned}
       A_{\text{BD}} &= \frac{1}{4} \left(\abs{\ba_\teDR^{\transpo}\bC_\teR^{\inv}\ba_\teRS}+ \sqrt{\ba_\teDR^\Her\bC_\teR^{-1}\ba_\teDR \, \ba_\teRS^{\Her}\bC_\teR^{-1}\ba_\teRS} \right)^2\\
        &= \frac{1}{4} \left(\ba_\teRS^{\Her}\bC_\teR^{\inv}\ba_\teRS+ \ba_\teRS^{\Her}\bC_\teR^{-1}\ba_\teRS \right)^2\\
        &=  \left(\ba_\teRS^{\Her}\bC_\teR^{\inv}\ba_\teRS\right)^2.
    \end{aligned}
\end{equation}
Hence, the \ac{BD}-\ac{RIS} and the diagonal \ac{RIS} lead to the same performance
\begin{equation}\label{eq:SpecularSamePerf}
    A_{\text{D}}^{\Dec} = A_{\text{BD}} = \left(\ba_\teRS^{\Her}\bC_\teR^{\inv}\ba_\teRS\right)^2.
\end{equation} 
Interestingly, this is exactly the square of a conventional transmit array gain at the \ac{BS} (see \cite{TowardCircuitTheory} for the transmit array gain).
From \eqref{eq:SpecularSamePerf} it also directly follows that 
\begin{equation}
    A_{\text{D}}^{\Dec} = A_{\text{BD}} \ge A_{\text{D}}
\end{equation}
and, hence, the diagonal \ac{RIS} with decoupling network always outperforms the conventional diagonal \ac{RIS} in this specular reflection scenario.
Two particular scenarios are now further investigated, the front-fire as well as the end-fire directions.
\subsubsection*{Front-Fire}
In the case of having impinging waves in the front-fire direction we have $\alpha_{\text{Rx}} = \alpha_{\text{Tx}} =\frac{\pi}{2}$ and, hence, $\ba_\teRS = \ba_\teDR = \ones$.
The array gain, therefore, reads as
\begin{equation}
    \begin{aligned}\label{eq:FrontFireArrayGain}
        A_{\text{D}}^{\Dec} =  A_{\text{BD}} =  (\ones^\transpo\bC_\teR^{\inv}\ones)^2
    \end{aligned}
\end{equation}
and we arrive at Fig. \ref{fig:FrontFire}.
In fact, the array gains in Fig. \ref{fig:FrontFire} correspond to the ones in \cite[Fig. 6]{TowardCircuitTheory} with the key difference of having squared values.
We also notice that for $2k$ and for $2k-1$ elements the array gain converges to the same value for $ d \rightarrow 0$.
Please note, that the \ac{BD}-\ac{RIS} and the diagonal \ac{RIS} lead to the same performance in this scenario.
\subsubsection*{End-Fire}
In case of the end-fire direction we have $\alpha_{\text{Rx}}  = \pi,\; \alpha_{\text{Tx}} = 0$ and, hence, $\ba_\teRS = \ba_0,\;\ba_\teDR = \ba_0^*$ with $\ba_0 = \ba(0)$.
Therefore, we arrive at the array gain
\begin{equation}
        A_{\text{D}}^{\Dec} =  A_{\text{BD}}= \left(\ba_0^\Her \bC_\teR^{\inv}\ba_0 \right)^2 
\end{equation}
which is illustrated in Fig. \ref{fig:EndFire}.
\begin{figure}[!t]
	\centering
    \includegraphics*{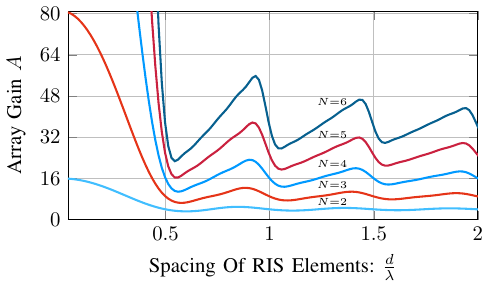}
    \vspace*{-0.3cm}
	\caption{Array Gain for the end-fire direction of the array.}
\label{fig:EndFire}
\end{figure}
Here, the gains correspond to those in \cite[Fig. 5]{TowardCircuitTheory} with squared values and we also observe a significant gain for $d\rightarrow 0$ as in \cite{TowardCircuitTheory}. 
By utilizing \cite{SquaredArrayGain} 
\begin{equation}\label{eq:QuadraticGainConventionalArray}
    \underset{d \rightarrow 0}{\lim} \ba_0^\Her \bC_\teR^{\inv}\ba_0 = N^2
\end{equation}
we obtain a quartic array gain in case of having a RIS in end-fire direction, i.e.,
\begin{equation}
    \underset{d\rightarrow 0}{\lim}\,A_{\text{D}}^{\Dec}= \underset{d\rightarrow 0}{\lim}\,A_{\text{BD}} = \underset{d\rightarrow 0}{\lim}\, (\ba_0^\Her \bC_\teR^{\inv}\ba_0)^2 = N^4.
\end{equation}
Hence, we can directly conclude that a super-quadratic array gain is possible when implementing a RIS in the end-fire direction.
Note again that in this case the \ac{BD}-\ac{RIS} and the diagonal \ac{RIS} lead to the same performance according to \eqref{eq:SpecularSamePerf}.
This super-quadratic gain clearly exceeds the gain of $N^2$ which has been generally assumed in existing literature.
The question arises whether this super-quadratic gain can be generalized to more cases except of being in end-fire direction and we will analyze this in the following.

\section{Super-Quadratic Gain}
We have seen that, specifically for the end-fire direction, a super-quadratic gain of
\begin{equation}\label{eq:endfirequartic}
    \underset{d\rightarrow 0}{\lim}\, (\ba_0^\Her \bC_\teR^{\inv}\ba_0)^2 = N^4
\end{equation}
is possible for both the diagonal \ac{RIS} as well as the \ac{BD}-\ac{RIS}.
In this section, we further investigate and generalize our result in \eqref{eq:endfirequartic}.
\subsection{Maximum at End-Fire}
Firstly, it is important to note that the end-fire direction gives the maximum possible gain in a \ac{RIS}-assisted scenario for $d\rightarrow 0$ which we summarize in the following theorem.
\begin{theorem}\label{theo:MaxGain}
    The maximum array gain of a decoupled \ac{RIS} for $d \rightarrow 0$ is given by 
    \begin{equation}
        \begin{aligned}
            \underset{d\rightarrow 0}{\lim}\,A_{\mathrm{D/BD}}^{\mathrm{DN}} &\le (\underset{d\rightarrow 0}{\lim}\,\ba_0^{\Her}\bC_\teR^{\inv}\ba_0)^2 = N^4
        \end{aligned}
    \end{equation}
which is achievable for both the \ac{BD} as well as the diagonal \ac{RIS} with decoupling networks.
Additionally, it follows that 
\begin{equation}
    A_{\mathrm{D}}^{\mathrm{DN}} = A_{\mathrm{BD}} \ge A_{\mathrm{D}}
\end{equation}
and, hence, the conventional \ac{RIS} without decoupling networks cannot exceed this gain.
\end{theorem}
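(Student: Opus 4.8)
The plan is to bound the larger of the two gains, $A_{\mathrm{BD}}^{\mathrm{DN}}$ in \eqref{eq:ArrayGainBD}, since $A_{\mathrm{D}}^{\mathrm{DN}}\le A_{\mathrm{BD}}^{\mathrm{DN}}$ is already known, and to show that this bound collapses to $N^{4}$ as $d\to 0$ while being attained in the end-fire direction. The argument rests on two ingredients: a Cauchy--Schwarz step decoupling the transmit and receive steering vectors, and an asymptotic evaluation of the scalar quadratic form $\ba(\alpha)^{\He}\bC_\teR^{\inv}\ba(\alpha)$ in terms of Legendre polynomials.

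For the first ingredient I would use the integral representation $\bC_\teR=\tfrac1{2}\int_{-1}^{1}\ba(\arccos u)\,\ba(\arccos u)^{\He}\,\ted u$, whose $(i,j)$ entry reproduces $\tfrac{\sin(2\pi d\abs{i-j}/\lambda)}{2\pi d\abs{i-j}/\lambda}=[\bC_\teR]_{i,j}$; this shows that $\bC_\teR$ is real, symmetric and, for $d>0$, positive definite, its quadratic form being $\tfrac1{2}\int_{-1}^{1}\abs{\bw^{\He}\ba(\arccos u)}^{2}\ted u>0$ for $\bw\neq\bm{0}$ (the integrand is a nonzero exponential polynomial in $u$). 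Applying Cauchy--Schwarz in the inner product $\langle\bx,\by\rangle=\bx^{\He}\bC_\teR^{\inv}\by$, and using that $\bC_\teR^{\inv}$ is real so that $(\ba_\teDR^{*})^{\He}\bC_\teR^{\inv}\ba_\teDR^{*}=\ba_\teDR^{\He}\bC_\teR^{\inv}\ba_\teDR$, yields
\begin{equation*}
    \abs{\ba_\teDR^{\transpo}\bC_\teR^{\inv}\ba_\teRS}\le\sqrt{\ba_\teDR^{\He}\bC_\teR^{\inv}\ba_\teDR}\;\sqrt{\ba_\teRS^{\He}\bC_\teR^{\inv}\ba_\teRS},
\end{equation*}
and substituting this into \eqref{eq:ArrayGainBD} gives $A_{\mathrm{BD}}^{\mathrm{DN}}\le\bigl(\ba_\teDR^{\He}\bC_\teR^{\inv}\ba_\teDR\bigr)\bigl(\ba_\teRS^{\He}\bC_\teR^{\inv}\ba_\teRS\bigr)$, hence a fortiori the same bound for $A_{\mathrm{D}}^{\mathrm{DN}}$.

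The second and main ingredient is to prove that for every direction $\alpha$
\begin{equation*}
    \lim_{d\to0}\ba(\alpha)^{\He}\bC_\teR^{\inv}\ba(\alpha)=\sum_{k=0}^{N-1}(2k+1)\,P_{k}(\cos\alpha)^{2}\le N^{2},
\end{equation*}
with equality exactly at $\cos\alpha=\pm1$, i.e. in the end-fire direction. By the integral representation, $\ba(\alpha)^{\He}\bC_\teR^{\inv}\ba(\alpha)=\max_{\bw}\abs{\bw^{\He}\ba(\alpha)}^{2}\big/\bigl(\tfrac1{2}\int_{-1}^{1}\abs{\bw^{\He}\ba(\arccos u)}^{2}\ted u\bigr)$ is the maximum transmit array gain in direction $\alpha$, realized as a Rayleigh quotient over the space $\mathcal S_{d}$ of array patterns $u\mapsto\bw^{\He}\ba(\arccos u)$. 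Writing $e^{-\im m\delta u}=(1+(e^{-\im\delta u}-1))^{m}$ with $\delta=2\pi d/\lambda$ shows $\mathcal S_{d}=\operatorname{span}\{(e^{-\im\delta u}-1)^{m}:m=0,\dots,N-1\}$, and since $(e^{-\im\delta u}-1)^{m}=(-\im\delta)^{m}u^{m}(1+\mathcal O(\delta))$ uniformly on $[-1,1]$, an $m$-dependent rescaling turns this into a basis converging uniformly to the monomials $1,u,\dots,u^{N-1}$; hence $\mathcal S_{d}$ converges, as an $N$-dimensional subspace of $C[-1,1]$, to $\mathcal P_{N-1}$, and the Rayleigh quotient converges to $\max_{q\in\mathcal P_{N-1}}\abs{q(\cos\alpha)}^{2}\big/\bigl(\tfrac1{2}\int_{-1}^{1}\abs{q(u)}^{2}\ted u\bigr)$, which is the reproducing-kernel (Christoffel) value of $\mathcal P_{N-1}$ in $L^{2}(\tfrac1{2}\ted u,[-1,1])$, namely $\sum_{k=0}^{N-1}(2k+1)P_{k}(\cos\alpha)^{2}$ with $P_{k}$ the Legendre polynomials. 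Since $\abs{P_{k}(x)}\le1$ on $[-1,1]$, this is at most $\sum_{k=0}^{N-1}(2k+1)=N^{2}$, with equality iff $P_{k}(\cos\alpha)^{2}=1$ for all $k$, i.e. $\cos\alpha=\pm1$; in particular this reproves $\lim_{d\to0}\ba_{0}^{\He}\bC_\teR^{\inv}\ba_{0}=N^{2}$, consistent with \eqref{eq:QuadraticGainConventionalArray}. I expect the delicate point to be exactly this interchange of $\lim_{d\to0}$ with the maximization over $\bw$, to be justified by the uniform convergence of the explicit rescaled basis of $\mathcal S_{d}$ together with the continuity of the generalized Rayleigh quotient under such a perturbation, the relevant Gram matrix remaining invertible in the limit.

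Combining the two ingredients, $\lim_{d\to0}A_{\mathrm{D/BD}}^{\mathrm{DN}}\le\lim_{d\to0}\bigl(\ba_\teDR^{\He}\bC_\teR^{\inv}\ba_\teDR\bigr)\bigl(\ba_\teRS^{\He}\bC_\teR^{\inv}\ba_\teRS\bigr)\le N^{2}\cdot N^{2}=N^{4}=\bigl(\lim_{d\to0}\ba_{0}^{\He}\bC_\teR^{\inv}\ba_{0}\bigr)^{2}$, which is the claimed upper bound. Tightness is already contained in the end-fire computation preceding the theorem, where $A_{\mathrm{D}}^{\mathrm{DN}}=A_{\mathrm{BD}}=(\ba_{0}^{\He}\bC_\teR^{\inv}\ba_{0})^{2}\to N^{4}$, so the value $N^{4}$ is achievable for both the diagonal and the \ac{BD} decoupled \ac{RIS}. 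Finally, $A_{\mathrm{D}}^{\mathrm{DN}}=A_{\mathrm{BD}}$ in this specular geometry by \eqref{eq:SpecularSamePerf}, and $A_{\mathrm{BD}}\ge A_{\mathrm{D}}$ because a diagonal adjustable impedance network is a feasible point of the fully-connected \ac{BD} maximization; hence the conventional \ac{RIS}, with or without decoupling networks, also cannot exceed $N^{4}$.
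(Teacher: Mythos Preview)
Your proof is correct and follows the same approach as the paper: a Cauchy--Schwarz bound on $A_{\mathrm{BD}}^{\mathrm{DN}}$ by the product $\bigl(\ba_\teDR^{\He}\bC_\teR^{\inv}\ba_\teDR\bigr)\bigl(\ba_\teRS^{\He}\bC_\teR^{\inv}\ba_\teRS\bigr)$, followed by the transmit-array inequality $\lim_{d\to0}\ba(\alpha)^{\He}\bC_\teR^{\inv}\ba(\alpha)\le N^{2}$ with equality at end-fire, and the achievability already established in the specular end-fire computation. The only difference is that the paper simply invokes \cite{SquaredArrayGain} for the second step, whereas you supply a self-contained sketch via the integral representation of $\bC_\teR$ and the Christoffel--Legendre identity; your identification of the interchange of the limit $d\to0$ with the Rayleigh-quotient maximization as the delicate point is accurate, and it is precisely what the cited reference handles.
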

\begin{proof}
    We start with the array gain of the \ac{BD}-\ac{RIS} \eqref{eq:ArrayGainBD}

\begin{equation}
    \begin{aligned}
      \abs{\ba_\teDR^{\transpo}\bC_\teR^{\inv}\ba_\teRS} &= \abs{\ba_\teDR^{\transpo}\bC_\teR^{-\frac{1}{2}}\bC_\teR^{-\frac{1}{2}}\ba_\teRS}\\
    &\le \norm{\ba_\teDR^{\transpo}\bC_\teR^{-\frac{1}{2}}}_2 \norm{\bC_\teR^{-\frac{1}{2}}\ba_\teRS}_2\\
    &=  \sqrt{\ba_\teDR^\Her\bC_\teR^{-1}\ba_\teDR \, \ba_\teRS^{\Her}\bC_\teR^{-1}\ba_\teRS}\\
    \end{aligned}
\end{equation}
and, hence, we can bound the gain of the \ac{BD}-\ac{RIS} as
\begin{equation}
    \begin{aligned}
        A_{\text{BD}} &= \frac{1}{4} \left(\abs{\ba_\teDR^{\transpo}\bC_\teR^{\inv}\ba_\teRS}+ \sqrt{\ba_\teDR^\Her\bC_\teR^{-1}\ba_\teDR \, \ba_\teRS^{\Her}\bC_\teR^{-1}\ba_\teRS} \right)^2\\
        &\le \ba_\teDR^\Her\bC_\teR^{-1}\ba_\teDR \, \ba_\teRS^{\Her}\bC_\teR^{-1}\ba_\teRS.
    \end{aligned}
\end{equation}
From the analysis of the transmit array \cite{SquaredArrayGain} we know that
\begin{equation}
    \underset{d\rightarrow 0}{\lim}\,\ba_\teRS^{\Her}\bC_\teR^{\inv}\ba_\teRS \le  \underset{d\rightarrow 0}{\lim}\,\ba_0^{\Her}\bC_\teR^{\inv}\ba_0 = N^2
\end{equation}
and, hence, the end-fire direction gives the largest gain for $d\rightarrow 0$.
Therefore, we arrive at 
\begin{equation}
    \underset{d\rightarrow 0}{\lim}\, A_{\text{BD}}  \le  (\underset{d\rightarrow 0}{\lim}\,\ba_0^{\Her}\bC_\teR^{\inv}\ba_0)^2 = N^4.
\end{equation}
This upper bound for $d \rightarrow 0$ is achievable in the end-fire direction.
It is important to note that this maximum gain is also achievable by the diagonal \ac{RIS} in the end-fire direction
\begin{equation}
    \begin{aligned}
        \underset{d\rightarrow 0}{\lim}\,A_{\text{D}}^{\Dec} &=(\underset{d\rightarrow 0}{\lim}\,\ba_0^{\Her}\bC_\teR^{\inv}\ba_0)^2 = N^4
    \end{aligned}
\end{equation}
and, hence, the \ac{BD}-\ac{RIS} and the diagonal \ac{RIS} have the same maximum channel gain which is quartic.
\end{proof}
\subsection{Super-Quadratic Gain of Coupled RISs}
From the last sections we know that the gain of an uncoupled \ac{RIS}/decoupled \ac{RIS} at $\frac{\lambda}{2}$ is given by 
\begin{equation}\label{eq:DecoupledUpperBound}
    \begin{aligned}
        A_{\text{BD}} = A_{\text{D}}^{\Dec} = \frac{1}{4} \left(\abs{\ba_\teDR^{\transpo}\ba_\teRS}+N \right)^2 \le N^2
    \end{aligned}
\end{equation}
whereas for the coupled \ac{RIS} for $d \rightarrow 0$ it is given by 
\begin{equation}
    \begin{aligned}
        A_{\text{BD}} = A_{\text{D}}^{\Dec} = N^4.
    \end{aligned}
\end{equation}
This shows that mutual coupling can lead to a significantly better performance in comparison to an uncoupled \ac{RIS} array (decoupled \ac{RIS} array with $d= \frac{\lambda}{2}$).
However, the requirement of using \eqref{eq:QuadraticGainConventionalArray} for deriving Theorem \ref{theo:MaxGain} indicates that the quartic gain is only achievable in the case of placing the BS, the RIS as well as the user exactly in end-fire direction.
Indeed, similar to the case for transmit arrays in \cite{TowardCircuitTheory}, when placing them in front-fire direction, mutual coupling for $d \rightarrow 0$ decreases the gain compared to having no mutual coupling (cf. \mbox{Fig \ref{fig:FrontFire}}).
However, there is a major difference in comparison to a conventional transmit array.
While we cannot know the position of the user and, hence, cannot ensure to be in end-fire direction, 
it is possible to position the \ac{RIS} and the \ac{BS} as desired.

Hence, it is possible to place the \ac{RIS} and the \ac{BS} in end-fire direction illustrated in Fig. \ref{fig:ScenarioCorner}.
\begin{figure}[h!]

	\centering
	
	\resizebox{0.3 \textwidth}{!}{
    \includegraphics*{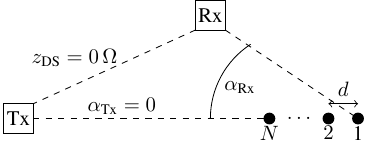}}
	\caption{BS and RIS positioned in end-fire direction.}
	\label{fig:ScenarioCorner}
	
\end{figure}
\begin{figure}[t!]
    \subfigure[$N=4$]{
    \includegraphics*{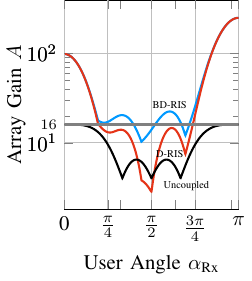}
    }\hspace*{0pt}\subfigure[$N=8$]
    {
        \includegraphics*{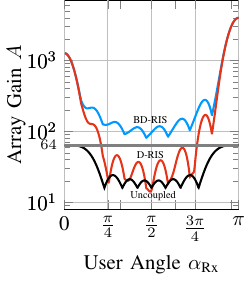}
    }
    \caption{Array Gain over the user angle for $d = \frac{\lambda}{32}$.}
    \label{fig:IdealTheorem}
\vspace*{-0.5cm}
\end{figure}
We already know that if the user is in the end-fire direction, i.e., $\alpha_{\text{Rx}}=\pi$, we observe a super-quadratic gain of $N^4$ for $d\rightarrow 0$, significantly exceeding the $N^2$ gain of an uncoupled \ac{RIS} (or decoupled \ac{RIS} with $\frac{\lambda}{2}$ spacing).
For other user angles $\alpha_{\text{Rx}}$ this is not the case.
However, we will show that it is still possible for the \ac{BD}-\ac{RIS} to achieve a super-quadratic gain in the order of $N^3$.
Since we position the BS and the \ac{RIS} in end-fire direction, we obtain the array gain (cf. \eqref{eq:ArrayGainBD})
\begin{equation}
    \begin{aligned}
       A_{\text{BD}} &= \frac{1}{4} \left(\abs{\ba_\teDR^{\transpo}\bC_\teR^{\inv}\ba_0}+ \sqrt{\ba_\teDR^\Her\bC_\teR^{-1}\ba_\teDR \, \ba_0^{\Her}\bC_\teR^{-1}\ba_0} \right)^2\\
       &\ge \frac{1}{4} \ba_\teDR^\Her\bC_\teR^{-1}\ba_\teDR \, \ba_0^{\Her}\bC_\teR^{-1}\ba_0.
    \end{aligned}
\end{equation}
Hence, for $d \rightarrow 0$, we can again utilize \eqref{eq:QuadraticGainConventionalArray} and observe that the array gain still exhibits a lower bound, i.e.,
\begin{equation}\label{eq:LowerBoundBDGeneral}
    \underset{d\rightarrow 0}{\lim}\,A_{\text{BD}} \ge  \frac{N^2}{4} \underset{d\rightarrow 0}{\lim}\,\ba_\teDR^\Her\bC_\teR^{-1}\ba_\teDR.
\end{equation}
In consequence, placing the BS and the RIS in end-fire direction leads to an $N^2$ scaling for small antenna spacings.
However, the lower bound in \eqref{eq:LowerBoundBDGeneral} also contains the term $\ba_\teDR^\Her\bC_\teR^{-1}\ba_\teDR$ which is the expression of a conventional transmit array gain and depends on the user angle $\alpha_{\text{Rx}}$.
This expression is analyzed in the following.
From \cite{SquaredArrayGain} we know that the conventional array gain for a coupled array reads as 
\begin{equation}\label{eq:AntennaGainChristoffel}
    \underset{d\rightarrow 0}{\lim}\,\ba_\teDR^\Her\bC_\teR^{-1}\ba_\teDR = \summe{n=0}{N-1}(2n+1)P^2_n(\cos(\alpha_{\text{Rx}}))
\end{equation}
where $P_n(x)$ is the Legendre polynomial of degree $n$.
To end up with a super-quadratic lower bound on the array gain in \eqref{eq:LowerBoundBDGeneral}, we start with the following proposition

\begin{proposition}\label{prop:GlobalMinimum}
    The function \mbox{$f_N(x) =  \summe{n=0}{N-1}(2n+1)P^2_n(x)$} with $x \in [-1,1]$ is lower bounded by 
    \begin{equation}
        f_N(x) \ge \frac{N}{2}\quad \forall x
    \end{equation}
    and its global minimum is given by $f(x_{\text{min}}) = N^2 P_N^2(x_{\text{min}})$. 
    For even $N$ the global minimum is attained at $x_{\text{min}}=0$, whereas for odd $N$
    the global minima are at $x_{\text{min}} = \pm x_0$ where $x_0$ is the positive zero of the derivative $P_N^\prime(x)$ which is closest to $x=0$. 
    \begin{proof}
        Please see Appendix \ref{app:GlobMin}.
    \end{proof}
\end{proposition}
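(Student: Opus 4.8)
The plan is to collapse the sum $f_N$ into a closed form and then carry out an elementary (but careful) calculus analysis on $[-1,1]$. First I would establish the identity
\begin{equation}
    f_N(x) = (1-x^2)\bigl[P_N'(x)\bigr]^2 + N^2 \bigl[P_N(x)\bigr]^2 .
\end{equation}
This follows from the confluent Christoffel--Darboux formula, which gives $f_N(x) = N\bigl[P_N'(x)P_{N-1}(x) - P_{N-1}'(x)P_N(x)\bigr]$, combined with the contiguous relations $P_{N-1}'(x) = xP_N'(x) - N P_N(x)$ and $(1-x^2)P_N'(x) = N\bigl(P_{N-1}(x) - xP_N(x)\bigr)$. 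Differentiating this closed form and inserting Legendre's equation $(1-x^2)P_N''(x) = 2xP_N'(x) - N(N+1)P_N(x)$ makes everything collapse to
\begin{equation}
    f_N'(x) = 2\, P_N'(x)\, P_{N-1}'(x),
\end{equation}
so that the interior critical points of $f_N$ are precisely the zeros of $P_N'$ together with the zeros of $P_{N-1}'$.

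Second, I would pin down the global minimizer among these points. At the endpoints $f_N(\pm 1) = \sum_{n=0}^{N-1}(2n+1) = N^2$, and since $|P_N(x)|<1$ on $(-1,1)$ the closed form shows $f_N$ drops strictly below $N^2$ at the zeros of $P_N'$; hence the minimum is interior. Using the classical strict interlacing of the zeros of the consecutive derivative polynomials $P_{N-1}'$ and $P_N'$ (equivalently of the ultraspherical polynomials $C_{N-2}^{(3/2)}$ and $C_{N-1}^{(3/2)}$) and tracking the sign of $f_N' = 2P_N'P_{N-1}'$ starting from $x=1$ (where $f_N'(1)=2P_N'(1)P_{N-1}'(1)>0$), one sees that $f_N'$ changes sign at each critical point and that every zero of $P_N'$ is a local minimum while every zero of $P_{N-1}'$ is a local maximum. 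Therefore the global minimum is attained at some zero $x_{\min}$ of $P_N'$, where the closed form reduces to $f_N(x_{\min}) = N^2 P_N^2(x_{\min})$ — the claimed value.

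Third, to identify which zero of $P_N'$ this is, I would introduce the auxiliary function $g(x) = P_N^2(x) + \frac{1-x^2}{N(N+1)}\bigl[P_N'(x)\bigr]^2$ and show, again via Legendre's equation, that $g'(x) = \frac{2x}{N(N+1)}\bigl[P_N'(x)\bigr]^2$, so $g$ is non-increasing on $[-1,0]$ and non-decreasing on $[0,1]$. Since $g(x) = P_N^2(x)$ at every zero of $P_N'$, and both $g$ and the zero set of $P_N'$ are symmetric about the origin, $P_N^2(x_{\min})$ is smallest at the zero of $P_N'$ closest to $x=0$. For even $N$, $P_N'$ is odd so $P_N'(0)=0$ and $x_{\min}=0$; for odd $N$, $P_N'(0)\neq 0$ and the closest zeros are $\pm x_0$ with $x_0$ the smallest positive zero of $P_N'$, exactly as stated.

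Finally, for the uniform bound $f_N(x)\ge N/2$ I would evaluate at $x_{\min}$. For even $N=2m$, $f_N(0) = N^2 P_N(0)^2 = N^2\bigl(\tfrac{(2m-1)!!}{(2m)!!}\bigr)^2$, and the central-binomial estimate $\binom{2m}{m}\ge 4^m/(2\sqrt m)$ (an easy induction) gives $P_N(0)^2 \ge \tfrac1{4m} = \tfrac1{2N}$, hence $f_N(0)\ge N/2$. For odd $N$, monotonicity of $g$ gives $f_N(\pm x_0) = N^2 g(x_0) \ge N^2 g(0) = \tfrac{N}{N+1}\bigl[P_N'(0)\bigr]^2$; using $P_N'(0) = N P_{N-1}(0)$ together with the same binomial estimate yields $\bigl[P_N'(0)\bigr]^2 \ge \tfrac{N^2}{2(N-1)}$, so $f_N(\pm x_0)\ge \tfrac{N^3}{2(N^2-1)}\ge N/2$. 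The step I expect to be the main obstacle is the second one: rigorously ruling out that the global minimum sits at one of the local maxima (the zeros of $P_{N-1}'$). This rests on the interlacing of the zeros of $P_N'$ and $P_{N-1}'$ and on a correct sign count for $f_N'=2P_N'P_{N-1}'$ across the whole interval; the remaining steps are either ODE bookkeeping or standard binomial inequalities.
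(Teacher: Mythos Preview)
Your proof is correct and essentially follows the paper's route: the same closed form $f_N=(1-x^2)(P_N')^2+N^2P_N^2$ via Christoffel--Darboux, the same factorization of $f_N'$ (the paper writes $f_N'=2P_N'(x)\,b(x)$ with $b(x)=(1-x^2)P_N''-xP_N'+N^2P_N$, which by Legendre's equation is exactly your $P_{N-1}'$), the same auxiliary function (your $g$ is the paper's $g_N/N^2$), and the same inductive estimate on $P_N(0)^2$ for the $N/2$ bound. The only tactical difference is that you classify the critical points by interlacing of the zeros of $P_N'$ and $P_{N-1}'$ and a sign count of $f_N'=2P_N'P_{N-1}'$, whereas the paper computes $f_N''(x_+)<0$ directly at the zeros $x_+$ of $b$; both are valid and lead to the identical conclusion that the minima sit at the zeros of $P_N'$.
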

\begin{corollary}\label{col:ArrayLowerBound}
    The conventional transmit antenna gain $\ba_{\mathrm{DR}}^\Her\bC_\teR^{-1}\ba_{\mathrm{DR}}$ for $ d \rightarrow 0$  attains its minimum gain in the front-fire direction ($\alpha_{\text{Rx}} = \frac{\pi}{2}$) for even $N$.
    Additionally, the antenna gain is bounded by
    \begin{equation}
        \underset{d\rightarrow 0}{\lim}\,\ba_{\mathrm{DR}}^\Her\bC_\teR^{-1}\ba_{\mathrm{DR}} \ge \frac{N}{2}.
        \vspace*{-0.cm}
    \end{equation}
    \begin{proof}
        This follows directly from Proposition \ref{prop:GlobalMinimum} and \eqref{eq:AntennaGainChristoffel}.
    \end{proof}
\end{corollary}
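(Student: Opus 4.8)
The plan is to reduce the proposition to the compact identity
\[
f_N(x) \;=\; (1-x^2)\,P_N'(x)^2 \;+\; N^2\,P_N(x)^2 ,
\]
which I would establish first. The quickest route uses the confluent Christoffel--Darboux formula $f_N(x)=N\bigl(P_N'(x)P_{N-1}(x)-P_{N-1}'(x)P_N(x)\bigr)$ together with the standard Legendre relations $(1-x^2)P_N'(x)=N\bigl(P_{N-1}(x)-xP_N(x)\bigr)$ and $P_{N-1}'(x)=xP_N'(x)-NP_N(x)$: substituting the first for $P_{N-1}$ and the second for $P_{N-1}'$ collapses the bracket to $\tfrac{1-x^2}{N}P_N'(x)^2+NP_N(x)^2$. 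Differentiating this closed form (and using Legendre's ODE to eliminate $P_N''$) also yields the companion identity $f_N'(x)=2\,P_N'(x)\,P_{N-1}'(x)$, which I would alternatively obtain directly from $(2n+1)P_n=P_{n+1}'-P_{n-1}'$ and a telescoping sum, recovering the closed form by matching the primitive at $x=1$, where $f_N(1)=\summe{n=0}{N-1}(2n+1)=N^2$.

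With these in hand I would locate the global minimum. Since $f_N'=2P_N'P_{N-1}'$, the interior critical points of $f_N$ on $[-1,1]$ are exactly the zeros of $P_N'$ and of $P_{N-1}'$ (simple and disjoint, a common zero being impossible as $P_N$ has only simple zeros), while $f_N(\pm1)=N^2$. At a zero $x^\ast$ of $P_N'$ the closed form gives $f_N(x^\ast)=N^2P_N(x^\ast)^2$. To decide which critical points are minima, I would use that $P_N'$ and $P_{N-1}'$ are, up to positive constants, the consecutive Jacobi polynomials $P^{(1,1)}_{N-1}$ and $P^{(1,1)}_{N-2}$, so their zeros strictly interlace; combined with $f_N'>0$ just inside the endpoints (positive leading coefficients), a sign chase then shows that $f_N'$ alternates sign across successive critical points, so the zeros of $P_N'$ are precisely the local minima of $f_N$ (the zeros of $P_{N-1}'$ and $\pm1$ being local maxima). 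Hence the global minimum equals $N^2P_N(x_{\min})^2$ with $x_{\min}$ a zero of $P_N'$. To identify $x_{\min}$ I would bring in the ``amplitude'' $h_N(x)=(1-x^2)P_N'(x)^2+N(N+1)P_N(x)^2$, which satisfies $h_N'(x)=2xP_N'(x)^2$ and is therefore nonincreasing on $[-1,0]$ and nondecreasing on $[0,1]$; since $h_N(x^\ast)=N(N+1)P_N(x^\ast)^2$ at a zero $x^\ast$ of $P_N'$, the value $|P_N(x^\ast)|$ increases with $|x^\ast|$, so the minimum sits at the zero(s) of $P_N'$ nearest the origin, i.e.\ $x_{\min}=0$ for even $N$ (where $P_N'(0)=0$) and $x_{\min}=\pm x_0$ for odd $N$.

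It remains to derive $f_N(x)\ge N/2$, which I would do by evaluating this minimum. For even $N=2m$, $f_N(x_{\min})=N^2P_N(0)^2=N^2\binom{2m}{m}^2 16^{-m}\ge N^2/(4m)=N/2$, using the Wallis-type bound $\binom{2m}{m}^2\ge 16^m/(4m)$. For odd $N$, $P_N(0)=0$ gives $h_N(0)=P_N'(0)^2$, so $P_N(x_0)^2\ge P_N'(0)^2/\bigl(N(N+1)\bigr)$; plugging $P_N'(0)^2=N^2\binom{N-1}{(N-1)/2}^2 4^{-(N-1)}\ge N^2/\bigl(2(N-1)\bigr)$ yields $f_N(x_{\min})=N^2P_N(x_0)^2\ge N^3/\bigl(2(N^2-1)\bigr)\ge N/2$ (the case $N=1$ being trivial since $f_1\equiv 1$). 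Corollary~\ref{col:ArrayLowerBound} is then immediate: by \eqref{eq:AntennaGainChristoffel} the limiting transmit gain is $f_N(\cos\alpha_{\text{Rx}})$, so the bound transfers verbatim, and for even $N$ the minimizer $x_{\min}=0$ corresponds to $\alpha_{\text{Rx}}=\pi/2$, the front-fire direction.

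I expect the main obstacle to be the step establishing that the zeros of $P_{N-1}'$ are local maxima rather than minima of $f_N$ — equivalently, that the interleaved critical values $N^2P_N^2$ and $N^2P_{N-1}^2$ never need to be compared head-on, since crude magnitude estimates do not separate them. This is exactly what forces the use of the interlacing of the zeros of $P_N'$ and $P_{N-1}'$ and the global sign analysis of $f_N'=2P_N'P_{N-1}'$; the remaining ingredients (closed form, monotonicity of $h_N$, Wallis bound) are routine.
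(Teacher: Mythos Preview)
Your proposal is correct and follows essentially the same path as the paper's proof of Proposition~\ref{prop:GlobalMinimum} in Appendix~\ref{app:GlobMin}: the same closed form $f_N=(1-x^2)(P_N')^2+N^2P_N^2$, the same amplitude function (your $h_N$ is the paper's $g_N$ rescaled by $(N+1)/N$, with the same derivative $\propto x(P_N')^2$), and the same Wallis-type bound on $P_N(0)^2$ and $P_N'(0)^2$. The one noteworthy variation is that you recognise the second factor of $f_N'$ as $P_{N-1}'$ and classify the critical points via interlacing of the Jacobi zeros, whereas the paper leaves that factor as an auxiliary $b(x)$ and shows its zeros are local maxima by computing $f_N''$ there directly; both routes yield the same classification, and the corollary then follows identically from~\eqref{eq:AntennaGainChristoffel}.
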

It should be noted that Corollary \ref{col:ArrayLowerBound} also holds for the conventional transmit array gain when no RIS is employed in the system.
By combining Corollary \ref{col:ArrayLowerBound} and \eqref{eq:LowerBoundBDGeneral}, we derive the following theorem.
\begin{theorem}\label{theo:LowerBoundRISGain}
    The array gain of a \ac{BD}-\ac{RIS} for $d \rightarrow 0$ achieves a super-quadratic gain of at least 
    \begin{equation}
        \underset{d\rightarrow 0}{\lim}\, A_{\text{BD}} \ge \frac{N^3}{8}.
        \vspace*{-0.2cm}
    \end{equation}
    This holds for any user angle $\alpha_{\text{Rx}}$.
\end{theorem}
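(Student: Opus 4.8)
The plan is simply to chain together the two ingredients that the preceding subsection has already assembled: the angle-independent lower bound \eqref{eq:LowerBoundBDGeneral} on the \ac{BD}-\ac{RIS} array gain, and the Legendre-sum floor of Corollary \ref{col:ArrayLowerBound} on the conventional transmit array gain. First I would recall \eqref{eq:LowerBoundBDGeneral}, namely $\underset{d\rightarrow 0}{\lim}\,A_{\text{BD}} \ge \frac{N^2}{4}\,\underset{d\rightarrow 0}{\lim}\,\ba_\teDR^\Her\bC_\teR^{-1}\ba_\teDR$, which was obtained by discarding the nonnegative cross term $\abs{\ba_\teDR^{\transpo}\bC_\teR^{\inv}\ba_0}$ in \eqref{eq:ArrayGainBD} and evaluating the remaining factor $\ba_0^{\Her}\bC_\teR^{-1}\ba_0$ via \eqref{eq:QuadraticGainConventionalArray}, using that the \ac{BS} and the \ac{RIS} are placed in end-fire direction so that $\ba_\teRS = \ba_0$. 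This step already extracts an $N^2$ factor that is independent of the user angle $\alpha_{\text{Rx}}$.

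Next I would invoke Corollary \ref{col:ArrayLowerBound}, which asserts $\underset{d\rightarrow 0}{\lim}\,\ba_\teDR^\Her\bC_\teR^{-1}\ba_\teDR \ge \frac{N}{2}$ for \emph{every} user angle, this being a consequence of Proposition \ref{prop:GlobalMinimum} applied to the identity \eqref{eq:AntennaGainChristoffel}. Substituting this bound into \eqref{eq:LowerBoundBDGeneral} yields $\underset{d\rightarrow 0}{\lim}\,A_{\text{BD}} \ge \frac{N^2}{4}\cdot\frac{N}{2} = \frac{N^3}{8}$, and since both inputs hold for arbitrary $\alpha_{\text{Rx}}$, so does the conclusion. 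That finishes the argument.

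I do not expect any real obstacle here: the theorem is essentially a corollary of the machinery built up earlier, and the genuine analytic work — establishing the $N/2$ floor on the Legendre sum $f_N(x)$ — has already been deferred to Proposition \ref{prop:GlobalMinimum} and its appendix. The only points deserving a line of care are (i) confirming that the limit appearing in \eqref{eq:LowerBoundBDGeneral} and the one bounded in Corollary \ref{col:ArrayLowerBound} are literally the same quantity $\underset{d\rightarrow 0}{\lim}\,\ba_\teDR^\Her\bC_\teR^{-1}\ba_\teDR$ for the fixed user direction, and (ii) noting that dropping the cross term in \eqref{eq:ArrayGainBD} is legitimate precisely because $\abs{\ba_\teDR^{\transpo}\bC_\teR^{\inv}\ba_0}\ge 0$, so the bound is never vacuous.
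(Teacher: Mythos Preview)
Your proposal is correct and matches the paper's own argument exactly: the paper simply states that the theorem follows by combining Corollary~\ref{col:ArrayLowerBound} with \eqref{eq:LowerBoundBDGeneral}, which is precisely the chaining you describe. No further work is needed.
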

By comparing the upper bound on the array gain of an uncoupled RIS in \eqref{eq:DecoupledUpperBound} with the lower bound on the array gain of BD-RISs in Theorem \ref{theo:LowerBoundRISGain}, we observe that for $N \geq 8$ the coupled RIS outperforms the uncoupled one when $d \rightarrow 0$ and considering no other losses.
More importantly, the coupled \ac{RIS} scales super-quadratic with the number of RIS elements as $N^3$ in comparison to an uncoupled \ac{RIS} which only scales with $N^2$.
Theorem \ref{theo:LowerBoundRISGain} is illustrated in Fig. \ref{fig:IdealTheorem}.
We can see that for both $N=4$ as well as for $N=8$ the \ac{BD}-\ac{RIS} leads to a better performance in comparison to the diagonal \ac{RIS}.
Only for the end-fire direction they achieve the same performance which validates our theoretical result in Theorem \ref{theo:MaxGain}.
Moreover, Fig. 7 also validates Theorem \ref{theo:LowerBoundRISGain}. 
For $N=4$, the \ac{BD}-\ac{RIS} achieves a good performance but it does not exceed $N^2$ for all user angles which is the maximum gain of the uncoupled \ac{RIS}.
This is different to $N=8$, where we can see that regardless of the user, the \ac{BD}-\ac{RIS} leads to a better performance.

\begin{figure}[t!]
    \includegraphics*{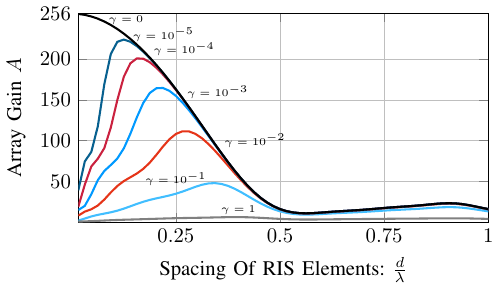}
\caption{Array Gain for the end-fire direction of the array for $N=4$.}
\label{fig:LossyEnd}
\end{figure}
\subsection{Ohmic Losses} 
We have seen in the last section that the coupled \ac{RIS} leads to significant improvements in comparison to an uncoupled \ac{RIS}, achieving an $N^4$ gain in the end-fire direction.
Moreover, the \ac{BD}-\ac{RIS} leads to at least a cubic gain for all user angles.
However, we considered lossless antenna arrays where the spacing $d \rightarrow 0$ and the simulations are based on a very small spacing of $d=\frac{\lambda}{32}$.
We will now investigate these super-quadratic gains under more realistic circumstances.
Equivalent to \cite{TowardCircuitTheory} we assume Ohmic losses and when including the dissipation resistance of $R_\text{d}$, we obtain the new coupling matrix 
\begin{equation}
    \bZ_\teR^{\text{loss}} = \bZ_\teR + \eyeM R_\text{d}.
\end{equation}
The real part of $\bZ_\teR$ with the values $R$ on the diagonal is now combined with the resistance $R_\text{d}$.
Similarly, we obtain
\begin{equation}
    \bC_\teR^{\text{loss}} =  \bZ_\teR^{\text{loss}}\frac{1}{R} = \bC_\teR + \gamma \eyeM
\end{equation}
with $\gamma = \frac{R_\text{d}}{R}$ and $R_\text{d}$ being the dissipation resistance. 
Hence, the ideal coupling matrix $\bC_\teR$ is now regularized by $\gamma \eyeM$.
We first study the end-fire direction where the largest (the $N^4$ gain) appears for the ideal lossless array.
Inlcuding Ohmic losses we arrive at the new expression
\begin{equation}
   A_{\text{D}}^{\Dec}=  A_{\text{BD}} = (\ba_0^\Her (\bC_\teR + \gamma \eyeM)^{\inv}\ba_0)^2
\end{equation}
for the end-fire direction,
where we can observe that the array gain decreases with an incresing $\gamma$.
Additionally, the higher the value of $\gamma$, the stronger the diagonal loading of the matrix $(\bC_\teR + \gamma \eyeM)$ and, hence, the more similar the scenario to the case without mutual coupling.

The resulting performance when considering Ohmic losses is illustrated in Fig. \ref{fig:LossyEnd} and resembles \cite[Fig. 11]{TowardCircuitTheory} with squared values.
\begin{figure}
    \includegraphics*{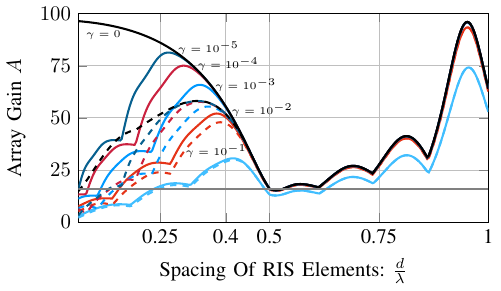}
	\caption{Array Gain for the $\alpha_{\text{Tx}}=0$ and $\alpha_{\text{Rx}}=\frac{\pi}{2}$ and $N=8$. The BD-RIS has solid curves and the diagonal RIS has dashed curves.}
\label{fig:LossyCorner}
\end{figure}

\begin{figure}[t!]
    \subfigure[Lossless $\gamma=0$]{
    \includegraphics*{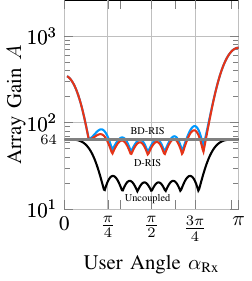}
    }\hspace*{0pt}\subfigure[Lossy ${\gamma=10}^{-2}$]
    {
        \includegraphics*{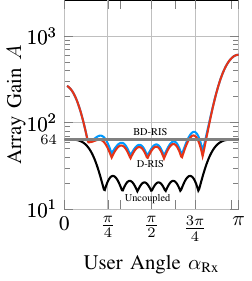}
    }
    \caption{Array Gain over the user angle for $d = 0.4$ and $N=8$.}
    \label{fig:AnglePlotRealistic}
    \end{figure}
We observe that a very small element spacing actually leads to a large degradation in the performance when considering Ohmic losses in the array.
Furthermore, the optimal spacing of the elements increases for increasing losses.
This behavior can also be observed for  $\alpha_{\text{Tx}}=0$ and $\alpha_{\text{Rx}}=\frac{\pi}{2}$ in Fig. \ref{fig:LossyCorner} where an increased spacing is necessary when losses are considered in the model.
The lossless curves in Fig. \ref{fig:LossyCorner} (see also Fig. \ref{fig:Corner}) show a significant difference between the \ac{BD}-\ac{RIS} and the diagonal \ac{RIS} for small spacings.
However, as the spacing increases, the difference between the two architectures decreases and eventually vanishes.
When inlcuding losses, an increased spacing is necessary and we can see that the difference between the \ac{BD}-\ac{RIS} and the diagonal \ac{RIS} decrease for increasing losses.

This is further investigated in Fig. \ref{fig:AnglePlotRealistic} for $\gamma = {10}^{-2}$ and a spacing of $d = 0.4$ (see Fig. \ref{fig:LossyCorner}).
In the lossless case, we can see that the increased spacing deteriorates the performance of the \ac{BD}-\ac{RIS} and increases the performance of the diagonal \ac{RIS}.
This is in accordance with Fig. \ref{fig:Corner} and the lossless curve of Fig. \ref{fig:LossyCorner}.
Hence, for this more practical spacing, the advantage of the \ac{BD}-\ac{RIS} over the diagonal \ac{RIS} is almost neglectable and they lead to roughly the same performance.
When including losses, the performance is very similar to the lossless case which is the advantage of choosing the spacing $d = 0.4 \lambda$.
In this case, the super-quadratic cannot be ensured anymore, but can still be observed around $\alpha_{\text{Rx}} = 0$ and  $\alpha_{\text{Rx}} = \pi$.

\section{Numerical Results}\label{sec:NumericalResults}
In this section, we evaluate the decoupling network together with the mutual coupling aware algorithms in the same \ac{LOS} scenario as in Section \ref{sec:ArrayPerfAnalysis} according to Fig. \ref{fig:Scenario}. 
\begin{figure}[t!]
    \includegraphics*{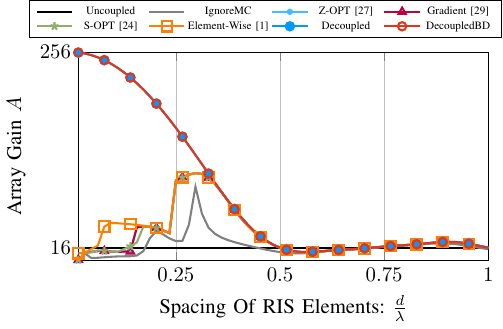}
    \caption{End-Fire direction for $N=4$.}
    \label{fig:AlgoEndFire}
\end{figure}
We focus on the diagonal \ac{RIS} where we compare the proposed decoupling network (Decoupled) with various algorithms for the conventional \ac{RIS} array under mutual coupling.
This includes the element-wise approach discussed in \cite{ConferenceDecoupling} (Element-Wise) together with the gradient approach of \cite{MutualCouplingGradient} (Gradient), the Neumann series approach of \cite{MutualCouplingZAlgo} (Z-OPT), and the two scattering parameter-based approaches in \cite{FollowUPSZJournal} (S-UNI and S-OPT).
As in \cite{FollowUPSZJournal}, S-OPT leads to better results than S-UNI in our analysis and, hence, S-UNI is omitted in the comparison.
To futher evaluate the results, we include a \ac{RIS} array without mutual coupling (Uncoupled) where we set $\bZ_\teR = \eyeM R$.
This is equal to a decoupled \ac{RIS} with a multiple $\frac{\lambda}{2}$ spacing.
Additionally, we include a naive optimization (IgnoreMC) where we solve the channel maximization by ignoring mutual coupling (assuming $\bZ_\teR = \eyeM R$) but then use the actual model with the non-diagonal coupling matrix $\bZ_\teR$ for the evaluation.
All the methods above are also initialized by this solution, i.e., by first ignoring the mutual coupling.
Please note that all baselines, i.e., Element-Wise, Gradient, Z-OPT, S-UNI and S-OPT require an iterative optimization whereas our proposed Decoupled enables closed-form solutions.
Additionally, we evaluate also the fully-connected \ac{BD}-\ac{RIS} as an upper bound.
Please note that according to Section \ref{sec:FullyConnectedRISDecoupled}, the fully-connected \ac{RIS} has the same performance with or without decoupling networks.
Hence, DecoupledBD is the performance for both versions.
In all simulations we consider the array gain as our metric.

First, we compare all methods for the end-fire direction with $N=4$ elements in Fig. \ref{fig:AlgoEndFire}.
Here, the quartic gain is achievable for $d \rightarrow 0$. 
Since, additionally, the matrix $\mathrm{Re}(\bZ_\teR)$ becomes ill-conditioned for $d \rightarrow 0$, small spacings are computationally challenging
For this scenario, the uncoupled method is significantly worse than the methods including mutual coupling and it is only well performing for $d \ge \frac{\lambda}{2}$.
The decoupling networks, on the other hand, lead to a quartic gain according to Theorem \ref{theo:MaxGain} and, additionally, Decoupled has the same performance as DecoupledBD (see Theorem \ref{theo:MaxGain}).
Moreover, the decoupling networks are clearly outperforming all mututal coupling aware algorithms for small spacings and, hence, systems without decoupling networks are not well-suited for a lossless scenario with small spacings.

\begin{figure}
    \includegraphics*{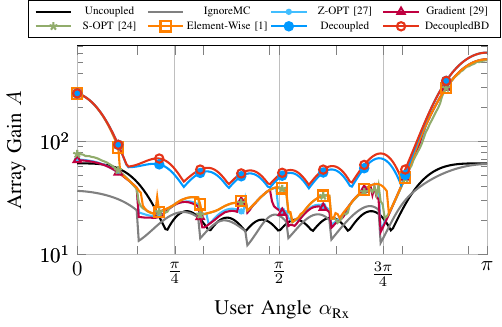}
    \caption{Element spacing of $d= 0.4 \lambda$ inlcuding losses with $\gamma={10}^{-2}$ for $N=8$.}
    \label{fig:AlgoAngleReal}
\end{figure}
Fig. \ref{fig:AlgoAngleReal} illustrates a more practical scenario with $\gamma = {10}^{-2}$ and $d=0.4$ for different user positions, i.e., angles $\alpha_{\text{Rx}}$.
We can see, that for this scenario, the DecoupledBD and the Decoupled method perform very similarly over all user angles.
Additionally, the algorithms perform better than in the theoretical case in Fig. \ref{fig:AlgoEndFire}, however, they are still worse than the Decoupled approach.
Moreover, including the effects of mutual coupling by choosing a smaller spacing than $\frac{\lambda}{2}$ increases the performance in comparison to the uncoupled method.

\section{Conclusion}
We have seen that the property of the decoupling networks to simplify the structure of the system model with mutual coupling to a structure without mutual coupling leads to new methods and new analytical results.
For lossless and very small antenna spacings, the decoupled \acp{RIS} lead to large performance gains in comparison to uncoupled \acp{RIS}.
Actually, super-quadratic gains are possible up to a quartic gain in the end-fire direction.
Also the \ac{BD}-\acp{RIS} are very promising in this ideal scenario, achieving a cubic gain over all user angles.
When, considering more realistic assumptions like lossy antenna arrays and larger element spacings (still sub-$\frac{\lambda}{2}$ spacings), the gains are lower but still  significantly better than arrays without mutual coupling and a super-quadratic gain can still be observed for certain scenarios.
For this more practical considerations, the performance of the \ac{BD}-\ac{RIS} and the diagonal \ac{RIS} with decoupling methods is very similar.
Additionally, we have seen that the decoupled \ac{RIS} outperfoms the current state-of-the-art algorithms both w.r.t. performance as well as computational complexity (closed-form solution vs. iterative algorithm).
The drawback of the decoupled networks is the increased hardware complexity which will be discussed in future work.
\vspace*{-0.5cm}
\appendix

\subsection{Global Minimum of $f(x)$}\label{app:GlobMin}
We will now derive the global minimum of the function $f_N(x)$ given by 
\begin{equation}
    f_N(x)=\summe{n=0}{N-1}(2n+1)P_n^2(x) >0
\end{equation}
for $x\in[-1,1]$ and additionally show $f_N(x) \ge \frac{N}{2} \, \forall x$.
It can be directly seen that for $x\pm 1$ the function obtains its maximum and we will exclude these points in the following.
By recognizing that $f_N(x)$ is the reciprocal of the Christoffel function for Legendre polynomials,
we can rewrite the function with the Christoffel-Darboux formula (see \cite[p.43, (3.2.4)]{PolynomialBook}) as 
\begin{equation}
    f_N(x) = N(P_N^\prime(x)P_{N-1}(x) - P_{N-1}^\prime(x)P_{N}(x)).
\end{equation}
By applying \cite[Lemma 1]{ChristoffelReformulate}, the objective is simplified to 
\begin{equation}
    f_N(x) = (1-x^2)(P_N^\prime(x))^2 + N^2 P_N^2(x)
\end{equation}
where $P_N^\prime(x)$ is the derivative of $P_N(x)$.
By recognizing similarities to the function  \cite[p.164, Theorem 7.3.1, (7.3.2)]{PolynomialBook} we define the lower bound on the objective 
\begin{equation}
    g_N(x) = \frac{N}{N+1}(1-x^2)(P_N^\prime(x))^2 + N^2 P_N^2(x) 
\end{equation}
and, hence, it holds that 
\begin{equation}
    f_N(x) \ge g_N(x).
\end{equation}
According to  \cite[p.164, Theorem 7.3.1, (7.3.2)]{PolynomialBook}, the derivative is given by 
\begin{equation}
    g_N^\prime(x) = \frac{2N}{N+1} x(P_N^\prime(x))^2.
\end{equation}
As $g_N(x)$ is monotonically increasing from $0$ to $\pm 1$, it attains its global minimum at $x=0$.
Therefore, we have 
\begin{equation}\label{eq:BoundsForChristoffel}
    f_N(x) \ge g_N(x) \ge g_N(0).
\end{equation}
\subsubsection{$N$ even}
We show now that $f_N(x) \ge \frac{N}{2}$ for even $N$.
When $N$ is even we have $P_N^\prime(0) = 0$ and, hence, 
\begin{equation}
    f_N(0)=g_N(0)=N^2 P_N^2(0).
\end{equation}
Combining this with \eqref{eq:BoundsForChristoffel}, we can already see that the global minimum of $f_N(x)$ for even $N$ is $x=0$ as $f_N(0)=g_N(0)$.
It remains to show that $P_N^2(0)\ge \frac{1}{2N}$.
From \cite[p.782, (22.7.10)]{Abramo} we have $N^2 P_N^2(0) = (N-1)^2 P_{N-2}^2(0)$ for $x=0$.
It follows that 
\begin{equation} \label{eq:SQLegendreAt0Bound}
    P_N^2(0) = \frac{(N-1)^2}{N^2}P_{N-2}^2(0) \ge \frac{(N-1)^2}{2N^2(N-2)} \ge  \frac{1}{2N}
\end{equation}
and with $P_2^2(0)=\frac{1}{4}$ the proof by induction is complete.

\subsubsection{$N$ odd}
When $N$ is odd, the function $f_N(x)$ does not attain its global minimum at $x=0$ (see Section \ref{sec:GlobMinimumDiscussion}) for a discussion).
In this case we use the lower bound, which reduces for odd $N$ to 
\begin{equation}
    f_N(x)\ge g_N(0) =\frac{N}{N+1}(P_N^\prime(0))^2
\end{equation}
as $P_N(0)=0$ for odd $N$. 
From \cite[p.783, (22.8.5)]{Abramo} we have $(P_N^\prime(0))^2 = N^2 P_{N-1}^2(0)$ for $x=0$.
Hence, it follows that 
\begin{equation}
    g_N(0) = \frac{N}{N+1} N^2 P_{N-1}^2(0) \ge \frac{N}{2}.
\end{equation}
The last step follows from the fact that $N-1$ is even and, hence, $P_{N-1}^2(0) \ge \frac{1}{2(N-1)}$ according to \eqref{eq:SQLegendreAt0Bound}.

\subsubsection{Global Minimum}\label{sec:GlobMinimumDiscussion}
We have already seen that for even $N$ the function attains its global minimum at $x=0$ with the value $f_N(0) = N^2 P_N^2(0)$.
For odd $N$ this is not the case which we will shortly discuss.
The derivative of the objective can be factorized into 
\begin{align*}
    f^\prime(x) &=2 P_N^\prime(x) b(x),\quad \text{with}
\end{align*}
\begin{align*}
    b(x) = (1-x^2)P_N^{\prime\prime}(x) -x P_N^\prime(x) + N^2  P_N(x).\\
\end{align*}
It can be seen that the extrema can be split into the two sets 
\begin{align}\label{eq:LegendreMaximumDef}
    &x_+: \, b(x_+) = 0, \, P_N^{\prime}(x_+) \neq 0 \quad \text{and}\\
    &x_-:\, P_N^{\prime}(x_-) = 0.
\end{align}
With \cite[p.781, 22.6.13; p.783, 22.8.5]{Abramo} it can be shown that 
\begin{equation}
    b(x) =  \frac{N(xP_{N-1}(x)-P_N(x))}{1-x^2}
\end{equation}
holds and, therefore, $b(x_+) = 0\hspace*{-4pt} \iff \hspace*{-4pt} P_N(x_+) = x_+P_{N-1}(x_+)$. \
Combining this with the properties of Legendre polynomials \cite[ p.783, 22.8.5; p.782, 22.7.10; p.783, 22.8.5]{Abramo}, it can be further shown that also
\begin{equation}
    f^{\prime \prime}(x_+) = -2(N-1)\frac{(P^{\prime}_{N}(x_+))^2}{1-x_+^2}<0
\end{equation}
holds and, hence, all points $x_+$ are local maxima.
As they are all maxima, the points $x_+$ don't have to be considered for the global minimum.
At the remaining stationary points $x_-$ the function can be written as 
\begin{equation}
    f_N(x_-) = N^2P_N^2(x_-)
\end{equation}
as $P_N^\prime(x_-)=0$. 
With $P_N^\prime(x_-)=0$ and $P_N(x_-) = - P_N^{\prime \prime}(x_-)\frac{1-x_-^2}{N(N+1)}$ from \cite[p.781, 22.6.13]{Abramo} it can be shown that $x_-$ are the maxima of $N^2 P_N^2(x)$ which are increasing form $0$ to $\pm 1$ (see \cite[p.164, Theorem 7.3.1, (7.3.2)]{PolynomialBook} ).
Hence, $f(x_-)$ is increasing from $0$ to $\pm 1$.
It follows that for even $N$, the global minimum is $x=0$ with the function value $f_N(0) = N^2 P_N^2(0)$.
For odd $N$, the global minimum is at $\pm x_0$ where $x_0$ is the positive zero of $P_N^\prime(x)$ nearest to $x=0$.

\vspace*{-0.018cm}
\bibliographystyle{IEEEtran}
\bibliography{refs}

\end{document}